\pgfplotsset{compat = 1.16}
\newcommand\copyrightnotice[1]{
	\begin{tikzpicture}[remember picture,overlay]
		\node[anchor=south,yshift=10pt] at (current page.south) {\fbox{\parbox{\dimexpr\textwidth-\fboxsep-\fboxrule\relax}{#1}}};
	\end{tikzpicture}
}
\renewcommand{\arraystretch}{0.85}
\newcommand\Tstrut{\rule{0pt}{2.2ex}}         % = `top' strut
\newcommand\Bstrut{\rule[-0.9ex]{0pt}{0pt}}   % = `bottom' strut
\newtheorem{thm}{Theorem}
\newtheorem{definition}[thm]{Definition}
\newtheorem{functionality}{Functionality}
\newcounter{protocol}%[chapter] 
\newenvironment{protocolDouble}[1][]{
	\let\c@algorithm\c@protocol
	\makeatletter
	\renewcommand{\ALG@name}{Protocol}
	\makeatother
	\begin{algorithm*}[#1]
	}{\end{algorithm*}
}
\newcounter{gate}%[chapter] 
\newenvironment{gate}[1][]{
	
	\let\c@algorithm\c@gate
	\renewcommand{\ALG@name}{Gate}
	
	\begin{algorithm}[#1]
	}{\end{algorithm}
}
\newcounter{alg}%[chapter] 
\newcommand{\ra}[1]{\renewcommand{\arraystretch}{#1}}
\setlist[enumerate]{label* = \arabic*.}
\newlist{myEnumerate}{enumerate}{9}
\setlist[myEnumerate]{label* = \arabic*.}
\newcolumntype{L}[1]{>{\raggedright\let\newline\\\arraybackslash\hspace{0pt}}m{#1}}
\newcolumntype{C}[1]{>{\centering\let\newline\\\arraybackslash\hspace{0pt}}m{#1}}
\newcolumntype{R}[1]{>{\raggedleft\let\newline\\\arraybackslash\hspace{0pt}}m{#1}}
\tikzset{every tree node/.style={align=center}}
\definecolor{smartgray}{RGB}{61, 62, 64}
\definecolor{smartblue}{HTML}{0B6296}
\definecolor{smartred}{HTML}{C54949}
\definecolor{smartyellow}{HTML}{d0d62a}
\definecolor{smartgreen}{HTML}{61c15d}
\definecolor{smartorange}{HTML}{ff7e00}
\definecolor{smartviolet}{HTML}{d19fe8}
\definecolor{smartbrown}{HTML}{480607}
\definecolor{smartcadet}{HTML}{5F9EA0}
\definecolor{RWTHblau}{RGB}{0,84,159}
\definecolor{RWTHmagenta}{RGB}{227,0,102}
\definecolor{RWTHgelb}{RGB}{255,237,0}
\definecolor{RWTHpetrol}{RGB}{0,79,101}
\definecolor{RWTHturkis}{RGB}{0,152,161}
\definecolor{RWTHgrun}{RGB}{87,171,39}
\definecolor{RWTHmaigrun}{RGB}{189,205,0}
\definecolor{RWTHorange}{RGB}{246,168,0}
\definecolor{RWTHrot}{RGB}{204,7,30}
\definecolor{RWTHbordeaux}{RGB}{161,16,53}
\definecolor{RWTHviolett}{RGB}{97,33,88}
\definecolor{RWTHlila}{RGB}{122,111,172}
\newcommand{\thickhline}{%
	\noalign {\ifnum 0=`}\fi \hrule height 1pt
	\futurelet \reserved@a \@xhline
}
\newcommand*{\MinNumber}{65}%
\newcommand*{\MidNumber}{90} %
\newcommand*{\MaxNumber}{100}%
\newcommand*{\celloffset}{0.2}
\newcommand{\ApplyGradient}[1]{%
	\ifdim #1 pt > \MidNumber pt
		\ifdim #1 pt < 100 pt
			\pgfmathsetmacro{\PercentColor}{max(min(100.0*(#1 - \MidNumber)/(\MaxNumber-\MidNumber),100.0),0.00)} %
			\hspace{-0.33em}\colorbox{smartgreen!\PercentColor!smartyellow}{\hspace*{0.543ex+\celloffset ex}#1\hspace*{0.543ex+\celloffset ex}}
		\else
			\pgfmathsetmacro{\PercentColor}{max(min(100.0*(#1 - \MidNumber)/(\MaxNumber-\MidNumber),100.0),0.00)} %
			\hspace{-0.33em}\colorbox{smartgreen!\PercentColor!smartyellow}{\hspace*{\celloffset ex}#1\hspace*{\celloffset ex}}
		\fi
	\else
		\pgfmathsetmacro{\PercentColor}{max(min(100.0*(\MidNumber - #1)/(\MidNumber-\MinNumber),100.0),0.00)} %
		\hspace{-0.33em}\colorbox{smartred!\PercentColor!smartyellow}{\hspace*{0.543ex+\celloffset ex}#1\hspace*{0.543ex+\celloffset ex}}
	\fi
}
\newcolumntype{R}{>{\collectcell\ApplyGradient}c<{\endcollectcell}}
\tikzset{MyArrow/.style={single arrow, draw=black, minimum width=10mm, minimum height=32mm, inner sep=0mm, single arrow head extend=1mm}}
\DeclareMathAlphabet{\pazocal}{OMS}{zplm}{m}{n}
\newcommand*{\proofCompCyclesSingle}{\ensuremath{c}}
\newcommand*{\proofOptCyclesSingle}{\ensuremath{o}}
\newcommand*{\proofCompCycles}{\ensuremath{C}}
\newcommand*{\proofOptCycles}{\ensuremath{O}}
\newcommand*{\compGraphMatrix}{\ensuremath{M}}
\newcommand*{\compGraph}{\ensuremath{G^C}}
\newcommand*{\bloodtypeIndicatorVector}[2]{\ensuremath{B^{#2}_{#1}}}
\newcommand*{\cg}{\ensuremath{G^\text{C}}}
\newcommand*{\compCheck}{\ensuremath{\gat_{\small{\textsc{comp-check}}}}}
\newcommand*{\prioGate}{\ensuremath{\gat_{\small{\textsc{prio}}}}}
\newcommand*{\prioMatrix}{\ensuremath{W}}
\newcommand*{\weightFunction}{\ensuremath{\omega}}
\newcommand*{\cycleBound}{\ensuremath{\kappa}}
\newcommand*{\weight}{\ensuremath{w}}
\newcommand*{\kepApProtocol}{\ensuremath{\prot_{\small{\textsc{kep-ap}}}}}
\newcommand*{\kepApFunc}{\ensuremath{\func_{\small{\textsc{kep-ap}}}}}
\newcommand*{\searchHeaviestCycle}{\ensuremath{\gat_{\small{\textsc{max-weight-set}}}}}
\newcommand*{\indicatorEmpty}{\ensuremath{\textit{ind}}}
\newcommand*{\demuxGate}[1]{\ensuremath{\gat_{\textsc{demux}^{#1}}}}
\newcommand*{\demuxVecGate}{\ensuremath{\gat_{\textsc{demux-vec}}}}
\newcommand*{\cycleWeight}{\ensuremath{\textit{weights}}}
\newcommand*{\kepApNodes}{\ensuremath{\textit{nodes}}}
\newcommand*{\heaviest}{\ensuremath{\textit{subset}}}
\newcommand*{\indicatorCombined}{\ensuremath{\textit{comb}}}
\newcommand*{\indexHeaviest}{\ensuremath{\textit{index}}}
\newcommand*{\chosenSets}{\ensuremath{\textit{chosen\_subsets}}}
\newcommand*{\chosenCycles}{\ensuremath{\textit{chosen\_cycles}}}
\newcommand*{\cycleIndices}{\ensuremath{\textit{indices}}}
\newcommand*{\firstCycle}{\ensuremath{\textit{first}}}
\newcommand*{\secondCycle}{\ensuremath{\textit{second}}}
\newcommand*{\chooseFirst}{\ensuremath{\textit{choose\_first}}}
\newcommand*{\subsetCycleMap}{\ensuremath{\textit{subset\_map}}}
\newcommand*{\shufflePermutation}{\ensuremath{\sigma}}
\newcommand*{\kepIpProtocol}{\ensuremath{\prot_{\small{\textsc{kep-ip}}}}}
\newcommand*{\solutionMatrix}{\ensuremath{A}}
\newcommand*{\outputDonors}{\ensuremath{D}}
\newcommand*{\outputRecipients}{\ensuremath{R}}
\newcommand*{\singleCycle}{\ensuremath{c}}
\newcommand*{\singleSubset}{\ensuremath{s}}
\newcommand*{\subsets}{\ensuremath{\mathcal{S}}}
\newcommand*{\inputQuote}[1]{\ensuremath{\mathcal{Q}_{#1}}}
\newcommand*{\sharedGreaterThan}{\ensuremath{\stackrel{?}{>}}}
\newcommand*{\sharedGreaterThanEqual}{\ensuremath{\stackrel{?}{\geq}}}
\newcommand*{\sharedMult}{\ensuremath{ \ \cdot \ }}
\newcommand*{\sharedScalMult}{\ensuremath{\times}}
\newcommand*{\shuffleGate}{\ensuremath{\gat_{\small{\textsc{shuffle}}}}}
\newcommand*{\reverseShuffleGate}{\ensuremath{\gat_{\small{\textsc{rev-shuffle}}}}}
\newcommand*{\bitsize}{\ensuremath{k}}
\newcommand*{\sharedDotProduct}{\ensuremath{\gat_{\small{\textsc{dot-product}}}}}
\newcommand*{\antigenvec}[1]{\ensuremath{A^{d}_{#1}}}
\newcommand*{\antibodyvec}[1]{\ensuremath{A^{p}_{#1}}}
\newcommand*{\perm}{\ensuremath{\sigma}}
\newcommand*{\ints}{\ensuremath{\mathbb{Z}}}
\newcommand*{\nats}{\ensuremath{\mathbb{N}}}
\newcommand*{\field}{\ensuremath{\mathbb{F}}}
\newcommand*{\bigo}[1]{\ensuremath{\mathcal{O}(#1)}}
\newcommand*{\enc}[1]{\ensuremath{[ #1 ]}}
\newcommand*{\partysymbol}[1]{\ensuremath{P_{#1}}}
\newcommand*{\parties}{\ensuremath{\mathcal{P}}}
\newcommand*{\numparties}{\ensuremath{N}}
\newcommand*{\prot}{\ensuremath{\pi}}
\newcommand*{\func}{\ensuremath{\pazocal{F}}}
\newcommand*{\gat}{\ensuremath{\rho}}
\renewcommand\footnotetextcopyrightpermission[1]{}
\begin{document}

%%
%% The "title" command has an optional parameter,
%% allowing the author to define a "short title" to be used in page headers.
\title{Efficient Privacy-Preserving Approximation of the Kidney Exchange Problem}

%%
%% The "author" command and its associated commands are used to define
%% the authors and their affiliations.
%% Of note is the shared affiliation of the first two authors, and the
%% "authornote" and "authornotemark" commands
%% used to denote shared contribution to the research.

\author{Malte Breuer}
\affiliation{%
  \institution{RWTH Aachen University}
  \city{Aachen}
  \country{Germany}}
\email{breuer@itsec.rwth-aachen.de}

\author{Ulrike Meyer}
\affiliation{%
	\institution{RWTH Aachen University}
	\city{Aachen}
	\country{Germany}}
\email{meyer@itsec.rwth-aachen.de}

\author{Susanne Wetzel}
\affiliation{%
	\institution{Stevens Institute of Technology}
	\city{Hoboken, NJ}
	\country{USA}}
\email{swetzel@stevens.edu}

%%
%% By default, the full list of authors will be used in the page
%% headers. Often, this list is too long, and will overlap
%% other information printed in the page headers. This command allows
%% the author to define a more concise list
%% of authors' names for this purpose.
\renewcommand{\shortauthors}{Breuer et al.}

%%
%% The abstract is a short summary of the work to be presented in the
%% article.
\begin{abstract}
	The kidney exchange problem (KEP) seeks to find possible exchanges among pairs of patients and their incompatible kidney donors while meeting specific optimization criteria such as maximizing the overall number of possible transplants. 
Recently, several privacy-preserving protocols for solving the KEP have been proposed. However, the protocols known to date lack scalability in practice since the KEP is an NP-complete problem. We address this issue by proposing a novel privacy-preserving protocol which computes an approximate solution for the KEP that scales well for the large numbers of patient-donor pairs encountered in practice.
As opposed to prior work on privacy-preserving kidney exchange, our protocol is generic w.r.t.\ the security model that can be employed.
Compared to the most efficient privacy-preserving protocols for kidney exchange existing to date, our protocol is entirely data oblivious and it exhibits a far superior run time performance. 
As a second contribution, we use a real-world data set to simulate the application of our protocol as part of a kidney exchange platform, where patient-donor pairs register and de-register over time, and thereby determine its approximation quality in a real-world setting.
\end{abstract}

%%
%% The code below is generated by the tool at http://dl.acm.org/ccs.cfm.
%% Please copy and paste the code instead of the example below.
%%
\begin{CCSXML}
	<ccs2012>
	<concept>
	<concept_id>10003456.10003462.10003602.10003606</concept_id>
	<concept_desc>Social and professional topics~Patient privacy</concept_desc>
	<concept_significance>500</concept_significance>
	</concept>
	<concept>
	<concept_id>10002978.10002991.10002995</concept_id>
	<concept_desc>Security and privacy~Privacy-preserving protocols</concept_desc>
	<concept_significance>500</concept_significance>
	</concept>
	</ccs2012>
\end{CCSXML}

\ccsdesc[500]{Social and professional topics~Patient privacy}
\ccsdesc[500]{Security and privacy~Privacy-preserving protocols}

%%
%% Keywords. The author(s) should pick words that accurately describe
%% the work being presented. Separate the keywords with commas.
\keywords{Kidney Exchange, Secure Multi-Party Computation, Privacy}

\maketitle
\pagestyle{empty}

\copyrightnotice{\copyright\space Copyright held by the owner/author(s) 2024. This is the author's version of the work. It is posted here for your personal use. Not for redistribution. The definitive version will be published in the \emph{ACM Asia Conference on Computer and Communications Security (ASIA CCS~’24), July 1–5, 2024, Singapore, Singapore.}}

\section{Introduction}
As of today, in the USA alone there are about 93,000 patients on the waiting list for a post-mortem kidney transplant with an average waiting time of four years~\cite{UNOS_WaitingList_2023}. Receiving a living kidney donation is an alternative to waiting for a post-mortem kidney transplant.
The main challenge with living kidney donation is to find organ donors who are medically compatible with the patients. 

A solution to this problem is kidney exchange where constellations between multiple patients with willing but incompatible donors (so-called \emph{patient-donor pairs}) are determined such that the patients can exchange their kidney donors among each other. 
The problem of finding a constellation that is optimal w.r.t.\ certain pre-defined criteria is known as the \emph{Kidney Exchange Problem}~(KEP). 

The KEP is typically modeled as a graph problem where a node~$ v $ is introduced for each patient-donor pair $ \partysymbol{v} $. An edge~$ (u, v) $ of weight $ \weightFunction(u, v) > 0 $ indicates that the donor of pair $ \partysymbol{u} $ is medically compatible with the patient of pair $ \partysymbol{v} $, where the weights correspond to the criteria under which the KEP is to be solved. These criteria may be used for the prioritization of certain exchanges (e.g., where patient and donor are of similar age). Solving the KEP then corresponds to finding a constellation of disjoint cycles in this \emph{compatibility graph} such that the sum of the weights of the edges in the constellation of disjoint cycles is maximized. Exchanges can only be carried out in form of \emph{exchange cycles} to guarantee that a donor of a pair donates her kidney only if the associated patient also receives a compatible kidney in return. In practice, the maximum allowed size of exchange cycles is typically limited to three since for larger cycles too many medical resources are needed simultaneously as all transplants in a cycle must be performed simultaneously to prevent a donor from dropping out at a time when the associated patient already received a kidney transplant~\cite{AshlagiKidneyExchangeOperations2021}.

The KEP is a well-researched problem in the field of operations research (e.g.,~\cite{AbrahamClearingAlgorithms2007,AndersonTSP2015,Dickerson_Chains_2012,Constantino_KEP_2013,RothEfficientKidneyExchange2007}) and in many countries there are already kidney exchange platforms that facilitate the computation of exchanges among the registered patient-donor pairs~\cite{AshlagiKidneyExchangeOperations2021,Biro_EuropeanModellingKE_2019}. However, there are also countries, such as Germany~\cite{TPG} or Brazil~\cite{Bastos_KEBrazil_2021,Roth_IllegalCountries_2022}, where kidney exchange is not allowed by law due to the fear of manipulation, corruption, and coercion. 

In countries, where kidney exchange is possible, the hospitals typically upload the medical data of their registered patient-donor pairs to the kidney exchange platform, where the KEP is solved repeatedly at certain pre-defined time intervals by the operator of the kidney exchange platform.
This centralized setting has two major security problems. (1) The platform operator is in complete control of the exchange computation and (2) the medical data of all patient-donor pairs is available in plaintext at a single entity.

This is particularly problematic as an attacker who can corrupt the platform operator may arbitrarily manipulate the computation of exchanges, e.g., enforcing that a particular patient is treated with priority.  
Also, in this centralized setting any attack leading~to~a~data breach has extremely severe consequences as it involves the highly sensitive medical data of many patients and donors. 
For the use case of organ donation this is a serious issue as trust in the platform is crucial to ensure the participation of as many donors as possible.
Overall, the platform operator is not only a prime target for high-impact attacks but also susceptible to manipulation and corruption.

\paragraph{Recent Advances}
To mitigate these security issues raised by existing centralized kidney exchange platforms, several privacy-preserving protocols for solving the KEP have been proposed~\cite{Breuer_KEprotocol_2020,Breuer_Matching_2022,Breuer_KepIp_2022,Birka_PPKidneyExchange_2022}. These protocols pursue a decentralized approach using \emph{Secure Multi-Party Computation} (SMPC). As such, they guarantee that all sensitive medical data is only stored in a decentralized fashion such that no single entity obtains access to the plaintext data of others. 
All of these protocols consider the semi-honest security model only, where the adversary is assumed to strictly follow the protocol specification. While this limits the impact of a data breach, it may not provide sufficient protection against manipulation of the computation of exchanges.
Besides, even the most efficient approaches known to date~\cite{Birka_PPKidneyExchange_2022,Breuer_KepIpExtended_2022} only scale for small numbers of patient-donor pairs and they are not data oblivious. More specifically, \cite{Birka_PPKidneyExchange_2022}, which computes an approximate solution to the KEP, reveals the number of cycles in the compatibility graph, and \cite{Breuer_KepIp_2022}, which computes an exact solution to the KEP, reveals the number of Simplex iterations required by their integer programming approach.

\paragraph{Contributions}
We propose a novel privacy-preserving protocol that is fully data oblivious and efficiently computes an approximate solution for the KEP.
In contrast to the existing protocols~\cite{Birka_PPKidneyExchange_2022,Breuer_KEprotocol_2020,Breuer_KepIp_2022,Breuer_Matching_2022}, we keep the security model for our protocol generic, using an \emph{arithmetic black box}. This allows us to utilize different SMPC primitives (also called \emph{general purpose SMPC protocols}). 
Depending on the level of security provided by the underlying SMPC primitive, our protocol exhibits security in the semi-honest or malicious model (where in the latter, the adversary may arbitrarily deviate from the protocol specification) considering either an honest or a dishonest majority. Thus, in the malicious model, our protocol also protects against any manipulation of the computation of exchanges. 

We have evaluated the run time of our protocol for different SMPC primitives using the framework MP-SPDZ~\cite{KellerMPSPDZ2020}. We show that our protocol achieves significantly superior run times both compared to the approximation protocol from~\cite{Birka_PPKidneyExchange_2022} and the fastest exact protocol from~\cite{Breuer_KepIp_2022}. For example, in \cite{Birka_PPKidneyExchange_2022} it is estimated that their protocol runs for more than 24 hours for 40 patient-donor pairs considering the semi-honest model with dishonest majority, whereas our protocol finishes within 2 minutes for the same security model. Similarly, the protocol in \cite{Breuer_KepIp_2022} runs for more than 105 minutes for 40~pairs in the semi-honest model with honest majority whereas our protocol finishes within 4 seconds for this security model and within 12 seconds for the malicious model with honest majority.

Furthermore, we have evaluated the approximation quality both of our novel protocol and the protocol from~\cite{Birka_PPKidneyExchange_2022} using a real-world data set from the United Network for Organ Sharing (UNOS), which is a large kidney exchange platform in the USA.\footnote{The data reported here have been supplied by the United Network for Organ Sharing as the contractor for the Organ Procurement and Transplantation Network. The~interpretation and reporting of these data are the responsibility of the author(s) and~in~no way should be seen as an official policy of or interpretation by the OPTN or the U.S. Government.} 
Our evaluation shows that despite its theoretical approximation ratio of $ 1/3 $, our protocol achieves an average approximation quality of about $ 80 $\% for a single run on the real-world data set from UNOS. Compared to~\cite{Birka_PPKidneyExchange_2022}, our protocol achieves a superior approximation quality for up to $ 30 $ patient-donor pairs. Note, however, that the run time of~\cite{Birka_PPKidneyExchange_2022} exceeds $ 24 $ hours for $ 40 $ pairs and more, and that run times longer than $ 24 $ hours are unacceptable in practice since there are kidney exchange platforms that execute match runs on a daily basis~\cite{AshlagiKidneyExchangeOperations2021}.

While an average approximation quality of $ 80 $\% for a single run may still seem too far from the optimal solution, it is important to recall that in practice the existing kidney exchange platforms solve the KEP repeatedly. Thus, not matching two patient-donor pairs in one run does not mean that these cannot be matched to other pairs in a future run. We investigate the impact of these dynamics in detail by means of simulations based on the real-world data set from UNOS. This allows us to determine the number of transplants found over time when using our approximation protocol in comparison to using an approach that computes the optimal solution in each match run. Our simulations show that the difference in the number of transplants found over time comparing our approach and the non-privacy-preserving approach is very small in practice. 

In countries where kidney exchange is already established, any loss in the number of transplants implied by our approach may not be acceptable. However, in countries such as Germany or Brazil, where security concerns currently prohibit kidney exchange, our approach may be the only viable option to pave the way for establishing kidney exchange in the future.
\section{Preliminaries}

In this section, we introduce relevant terminology in the context of Secure Multi-Party Computation (SMPC) (Section~\ref{sub:smpc}) and kidney exchange (Section~\ref{sub:kidney_exchange_terminology}) as well as the considered setup for privacy-preserving kidney exchange (Section~\ref{sub:pre_ppke_setup}).

\subsection{Secure Multi-Party Computation}\label{sub:smpc}

An SMPC protocol allows a set of parties to compute some functionality in a distributed fashion such that each party only learns its private input and output and what can be deduced from both. 

To ensure that our protocol does not depend on the use of a specific SMPC primitive, we consider an \emph{arithmetic black box}. This means that we allow the use of any SMPC primitive that is based on a secret-sharing scheme over the ring $ \ints_{2^k} $ (for a large integer $ k $) and that enables the local addition of secret values as well as the multiplication of secret values using a distributed protocol.\footnote{It is also possible to consider SMPC primitives over the field $ \field_p $. However, we only use the ring setting since it yields faster run times for our protocol.} In the following, we denote secret values by $ \enc{x} $, the $ i $-th entry of a vector $ \enc{V} $ of secret~values by $ \enc{V}(i) $, and the entry in the $ i $-th row and $ j $-th column of a matrix $ \enc{M} $ of secret values~by~$ \enc{M}(i, j) $. 

\paragraph{Security Model}
Depending on the used SMPC primitive, we obtain different levels of security for our privacy-preserving protocol for kidney exchange.  
We distinguish between the semi-honest and the malicious model. The former assumes that the corrupted parties strictly follow the protocol specification whereas in the latter the parties may arbitrarily deviate from the protocol specification.
Furthermore, we distinguish between an honest and a dishonest majority. The former assumes that less than half of the parties are corrupted and in the latter any number of parties may be corrupted as long as one party is honest. 
Note that our protocol for kidney exchange consists of a sequence of calls to gates for which there exist secure implementations in all of these security models. Following the composition theorem from~\cite{Canetti_UC_2001}, our protocol thus achieves the same level of security as provided by the used SMPC primitive.

\paragraph{Complexity Metrics}
We consider the two complexity metrics communication and round complexity, where the communication complexity is determined by the overall amount of data sent during a
protocol execution and the round complexity describes the number of sequential communication steps
required by a protocol. 

\paragraph{Gates}

In our privacy-preserving protocols, we use several existing gates for the computation of basic operations. One such operation is the comparison of two secret values. Damgård et al.~\cite{Damgard_NewPrimitivesActiveSecurity_2019} show how such a gate can be constructed with communication complexity $ \bigo{\bitsize} $ and round complexity $ \bigo{\log \bitsize} $, where $ \bitsize $ is the bitsize of secret-shared values. 
We also sometimes need to choose between two values $ \enc{x} $ and $ \enc{y} $ based on a secret bit $ \enc{z} $. We denote this gate by $ \enc{z} \ ? \ \enc{x} \ : \ \enc{y} $ where the output is $ \enc{x} $ if $ z = 1 $ and $ \enc{y} $ otherwise. It can be implemented by computing $ \enc{z} \cdot (\enc{x} - \enc{y}) + \enc{y} $ and thus requires a single multiplication. 
We introduce a gate $ \sharedDotProduct $ that computes the dot product of two secret vectors of size $ n $ requiring $ n $ multiplications and one round of communication.
The gate $ \demuxGate{n} $ is used for de-multiplexing a value $ x $ into a vector~of~size $ n $ that contains $ \enc{1} $ in the $ x $-th entry and $ \enc{0} $ for all other entries. This can be implemented using the approach from~\cite{Launchbury_Multiplex_2012} with communication complexity $ \bigo{2^\bitsize} $ and round complexity $ \bigo{\log \bitsize} $ (cf.~Appendix~\ref{app:gates}).
We use a gate $ \shuffleGate $ for obliviously shuffling the entries of a secret vector of size $ n $. We use the implementation from~\cite{Keller_EfficientDataStructures_2014} yielding communication complexity $ \bigo{n \log n} $ and round complexity $ \bigo{\log n} $.\footnote{While there are more efficient shuffling implementations (e.g.,~\cite{Araki_SecureGraphAnalysis_2021}), we use the one from~\cite{Keller_EfficientDataStructures_2014} as it is contained in MP-SPDZ and shuffling has a small impact on the run time of our protocol (e.g., $ 6\% $ of the overall runtime for 5 pairs, and $ 0.02\% $ for 200~pairs).} A gate $ \reverseShuffleGate $ that reverts the shuffling with permutation $ \perm $ can be implemented analogously~using~$ \perm^{-1} $. 

\subsection{Kidney Exchange Terminology}\label{sub:kidney_exchange_terminology}

The goal of kidney exchange is to find constellations in which a set of patient-donor pairs $ \partysymbol{i} $ ($ i \in \parties $) with $ \parties = \{0, ..., \numparties - 1\} $ can exchange their kidney donors among each other such that certain pre-defined optimization criteria are met. The medical compatibility between the pairs is modeled by a so-called \emph{compatibility graph}.

\begin{definition}[Compatibility graph]
	The compatibility graph $ \compGraph $ is defined as the directed graph $ \compGraph = (V, E, \weightFunction) $ containing a node for each patient-donor pair $ \partysymbol{i} $ (i.e., $ V = \{0, ..., \numparties - 1\} $). An edge $ (i, j) \in E $ indicates that the donor of pair $ \partysymbol{i} $ is medically compatible with the patient of pair $ \partysymbol{j} $ ($ i \neq j $) and the weight function $ \weightFunction : E \rightarrow \nats $ maps each edge $ (i, j) \in E $ to an integer weight that is used for prioritization.  
\end{definition}

We represent such a graph by an $ \numparties \times \numparties $ adjacency matrix $ M $ and an $ \numparties \times \numparties $ prioritization matrix $ \prioMatrix $ where $ M(i, j) = 1 $ iff $ (i, j) \in E $ and $ M(i, j) = 0 $, otherwise, and $ \prioMatrix(i, j) = \weightFunction(i, j) $.

In kidney exchange it must be ensured that the donor of a pair only donates her kidney if the associated patient also receives a compatible kidney transplant in return. Thus, exchanges among patient-donor pairs can only take place in form of \emph{exchange cycles}. 

\begin{definition}[Exchange cycle]
	For a set of patient-donor pairs $ \{\partysymbol{0}, ..., \partysymbol{\numparties-1}\} $, an exchange cycle $ \singleCycle $ of size $ \cycleBound $ (also called a $ \cycleBound $-cycle) is a tuple $ (v_1, .., v_{\cycleBound}) $ with $ v_i \neq v_j $ for $ i \neq j $ such that the donor of $ \partysymbol{v_i} $ is medically compatible with the patient of $ \partysymbol{v_{i+1}} $ for $ i \in \{1, ..., \cycleBound-1\} $ and the donor of $ \partysymbol{v_\cycleBound} $ is compatible with the patient of $ \partysymbol{v_1} $.
\end{definition}

The \emph{Kidney Exchange Problem} (KEP) then consists of finding a set of exchange cycles that is optimal w.r.t.\ certain pre-defined criteria. 

\begin{definition}[Kidney Exchange Problem (KEP)]\label{def:kep}
	Given a compatibility graph $ \compGraph $ for a set of patient-donor pairs $ \partysymbol{i} $ ($ i \in \{0, ..., \numparties-1\} $), the kidney exchange problem is defined as finding a set of vertex disjoint exchange cycles $ \{\singleCycle_{1}, ..., \singleCycle_{l}\} $ in $ \compGraph $ such that the sum of the edge weights of these cycles is maximized.
\end{definition}

Note that all transplants of a cycle have to be executed simultaneously to avoid the situation where a donor retreats from the exchange after the associated patient already received a kidney transplant. Therefore, the KEP is typically considered for a maximum cycle size $ \cycleBound = 3 $ as larger cycles require the simultaneous allocation of too many medical resources~\cite{AshlagiKidneyExchangeOperations2021,Biro_EuropeanModellingKE_2019}. Thus, we also consider $ \cycleBound = 3 $ in our protocol for computing an approximate solution to the KEP~(cf.~Section~\ref{sub:protocol_spec}). Note that the KEP is NP-complete if the maximum cycle size is restricted to any value larger than two~\cite{AbrahamClearingAlgorithms2007}. 

\begin{figure}
	\includegraphics*[scale=1.0]{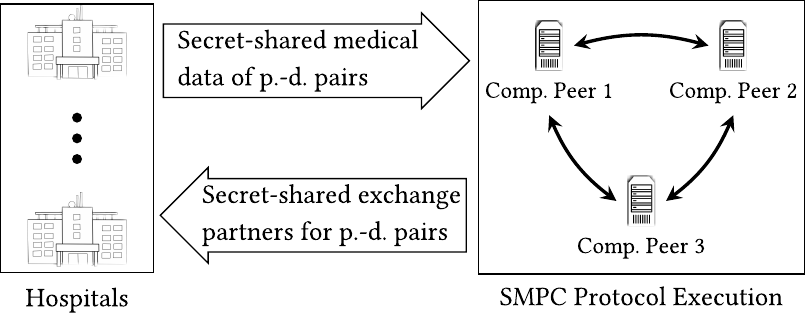}
	\caption{Common setup of a privacy-preserving kidney exchange platform where the hospitals secretly share the data of their patient-donor pairs (p.-d.\ pairs) among the computing peers who use SMPC to determine the exchange partners.}
	\label{fig:pp_platform}
	\vspace*{0.5em}
\end{figure}

\subsection{Privacy-Preserving Kidney Exchange Setup}\label{sub:pre_ppke_setup}

The common setup for privacy-preserving kidney exchange~\cite{Breuer_Matching_2022,Breuer_KepIp_2022,Birka_PPKidneyExchange_2022} is the \emph{client-server} model, in which the hospitals (clients), where the patient-donor pairs are registered, use secret-sharing to provide the private medical data of their patient-donor pairs to a set of computing peers (servers) that execute the actual SMPC protocol which determines potential exchange partners for the patient-donor pairs. The hospitals otherwise do not actively participate in the protocol execution. While the computing peers exchange messages among each other during the protocol execution, the use of SMPC guarantees that these messages do not leak any information. Thus, the computing peers do not learn anything from the protocol execution. Possible candidates for hosting the computing peers are governmental institutions, universities, or large transplant centers. After executing the SMPC protocol, the computing peers use secret-sharing to inform the hospitals about the computed potential exchange partners for their patient-donor pairs. The final decision of whether the computed transplants are carried out then lies with medical experts at the hospitals. Based on~\cite{Breuer_Matching_2022,Breuer_KepIp_2022,Birka_PPKidneyExchange_2022}, Figure~\ref{fig:pp_platform} shows this setup for the example of three computing peers.

\paragraph{Security Model}
In contrast to the related work in the field of privacy-preserving kidney exchange~\cite{Breuer_KEprotocol_2020,Breuer_Matching_2022,Breuer_KepIp_2022,Birka_PPKidneyExchange_2022}, we do not limit our protocol to the semi-honest model but also consider the malicious security model. Similarly, our protocol can be instantiated with a honest majority as well as with a dishonest majority of computing peers. Consequently, when using our protocol as part of a privacy-preserving kidney exchange platform, one can then decide on the particular security model to be considered. In general, it holds that the run time performance of the protocol decreases when considering the malicious model or a dishonest majority. Thus, the decision for a particular security model forms a trade-off between run time performance and the provided security model.

The assumption in the semi-honest model is that the corrupted parties strictly follow the protocol specification. Therefore, this model does not fully protect against an attacker who tries to manipulate the computation of the exchanges. However, the semi-honest model already ensures that a data breach at a single computing peer does not have any consequences as the shares that are present at a single computing peer do not reveal any information on the underlying sensitive medical data. If protection against any manipulation of the computation of the exchanges is desired, the malicious model has to be employed. This guarantees that an attacker is not able to manipulate the computation as long as he controls less than half of the computing peers (honest majority) or if at least one computing peer is not controlled by the attacker (dishonest majority). 
\section{Related Work}

In this section, we review related work in conventional and privacy-preserving kidney exchange (cf.~Section~\ref{sub:rw_convKE} and Section~\ref{sub:rw_ppKE}), as well as privacy-preserving approximation protocols (cf.~Section~\ref{sub:rw_ppApprox}).

\subsection{Conventional Kidney Exchange}\label{sub:rw_convKE}
The most efficient conventional (non-privacy-preserving) algorithms that compute an exact solution for the KEP use integer programming (IP)~\cite{AshlagiKidneyExchangeOperations2021,Biro_EuropeanModellingKE_2019}. This technique has also already been explored in the privacy-preserving setting but was shown to scale only for small numbers of patient-donor pairs~\cite{Breuer_KepIp_2022}.
Although the IP-based approaches efficiently compute an optimal solution in the conventional setting, some kidney exchange platforms still use an approximative approach~\cite{Biro_EuropeanModellingKE_2019}. 
In particular, Spain~\cite{Bofill_SpanishKEP_2017} uses a Greedy approach that chooses the $ 2 $-cycle of maximum weight until there are no more $ 2 $-cycles and only then starts choosing the $ 3 $-cycle of maximum weight. Thereby, cycles of size $ 2 $ are strictly prioritized over cycles of size $ 3 $. In contrast, the Greedy algorithm (cf.~Algorithm~\ref{alg:greedy}) that we follow in our protocol repeatedly chooses the cycle of maximum weight among all cycles of size $ 2 $ and $ 3 $. Note, however, that the Spanish approach can also be modeled in our approach by choosing the weight function such that it strictly prioritizes cycles of size $ 2 $ over cycles of size $ 3 $.
After executing the Greedy algorithm, the Spanish approach further optimizes the solution by increasing the number of $ 3 $-cycles that include a $ 2 $-cycle and by checking for each $ 3 $-cycle whether it can be replaced by two cycles of size $ 2 $.

\subsection{Privacy-Preserving Kidney Exchange}\label{sub:rw_ppKE}

To the best of our knowledge, there are two lines of work in privacy-preserving kidney exchange. The first comprises the SMPC protocols by Breuer et al.~\cite{Breuer_KEprotocol_2020,Breuer_Matching_2022,Breuer_KepIp_2022} that compute exact solutions for the KEP, considering the semi-honest model with honest~majority. 

The protocol presented in~\cite{Breuer_Matching_2022} solves the KEP for cycles of size $ 2 $ only. It relies on a privacy-preserving implementation of the matching algorithm by Pape and Conradt~\cite{PapeMatching1980} and it has communication complexity $ \bigo{\numparties^5} $ and round complexity $ \bigo{\numparties^4} $. This results in considerably high run times. E.g., the protocol runs for more than $ 5 $ days for $ 65 $ patient-donor pairs. However, the authors show that when used in a dynamic setting where pairs register and de-register over time, in practice the protocol nearly achieves the same number of transplants as the algorithmic approach used in centralized platforms. As the protocol considers the special case of $ 2 $-cycles, we do not compare our results to this protocol. However, we use the simulation framework from~\cite{Breuer_Matching_2022} for simulating our protocol~$ \kepApProtocol $ when used in a dynamic kidney exchange platform (cf.~Section~\ref{sec:simulation}). 

Breuer et al.\ also introduced two protocols that compute exact solutions for the KEP for cycles of size $ 2 $ and $ 3 $. The protocol from~\cite{Breuer_KEprotocol_2020} uses homomorphic encryption and follows a brute force approach that compares the secret compatibility graph to all possible existing constellations of exchange cycles. As the set of exchange constellations increases quickly for increasing numbers of patient-donor pairs, their protocol only exhibits feasible run times for a small number of pairs (e.g., it takes more than $ 13 $ hours for $ 9 $ pairs). 

In a subsequent work, Breuer et al.~\cite{Breuer_KepIp_2022} propose the protocol~$ \kepIpProtocol $, which is a more efficient SMPC protocol based on secret sharing for solving the KEP using integer programming. The protocol~$ \kepIpProtocol $ also computes an exact solution for the KEP but it is not data oblivious as its flow of execution depends on the number of iterations of the Simplex algorithm, which is used as a subroutine. The run times of the protocol~$ \kepIpProtocol $ are significantly better than those of the protocol from~\cite{Breuer_KEprotocol_2020}. Since this is the fastest SMPC protocol that exactly solves the KEP for cycles of size $ 2 $ and $ 3 $ to date, we compare the run time of our new protocol~$ \kepApProtocol $ against those of the protocol $ \kepIpProtocol $ to determine the performance gain of our approximation approach (cf.~Section~\ref{sub:eval_runtime}). We also simulate the use of $ \kepIpProtocol $ in a dynamic setting and compare the results to those obtained for our protocol~$ \kepApProtocol $ when used in a dynamic setting (cf.~Section~\ref{sub:sim_results}). The simulations show that using our protocol outperforms the use of~$ \kepIpProtocol $ for nearly all parameter constellations.

The second line of work is the approximation protocol by Birka et al.~\cite{Birka_PPKidneyExchange_2022}. They consider two computing peers and implement their protocol in the ABY framework~\cite{Demmler_ABY_2015}. Thus, their protocol is secure in the semi-honest model with dishonest majority.
Although~\cite{Birka_PPKidneyExchange_2022} also computes an approximation of the KEP, their underlying algorithmic approach is entirely different from Algorithm~\ref{alg:greedy}, which forms the basis for our protocol~$ \kepApProtocol $. 
Comparing the computed set of exchange cycles, there are two major differences between the two approximation approaches.
First,~\cite{Birka_PPKidneyExchange_2022} can only approximate the KEP for a single cycle size while our protocol considers all cycles up to the maximum cycle size (as is common when solving the KEP in practice~\cite{Biro_EuropeanModellingKE_2019,AshlagiKidneyExchangeOperations2021}).
Second, we directly apply a Greedy approximation on the set of~all potential exchange cycles, repeatedly adding the cycle of maximum weight to the solution. 
In contrast,~\cite{Birka_PPKidneyExchange_2022} first reduces the set of all potential cycles to the ones that exist in the compatibility graph.
Then, they compute one separate solution for each cycle in the graph, by first adding that cycle to the solution and then adding further cycles using a Greedy approach. 
Finally, they choose the solution that yields the highest weight among all computed solutions. 
Compared to our protocol, the protocol from~\cite{Birka_PPKidneyExchange_2022} yields significantly worse run times. For example, in~\cite{Birka_PPKidneyExchange_2022} it is estimated that their protocol runs for more than $ 24 $ hours for $ 40 $ pairs and cycle size $ 3 $ whereas our protocol terminates within $ 2 $~minutes, considering cycles of size $ 2 $ and $ 3 $ and the same security model as~\cite{Birka_PPKidneyExchange_2022}.\footnote{Due to its high RAM consumption \cite{Birka_PPKidneyExchange_2022} extrapolates the run times of their protocol once the number of patient-donor pairs is larger than $ 18 $.}
Thus, in contrast to our protocol, the run times of~\cite{Birka_PPKidneyExchange_2022} do not scale for the numbers of patient-donor pairs encountered in practice (e.g., around 200 pairs for UNOS). 
We attribute this to the different algorithmic approaches of the two protocols.
It is important to note that the protocol from~\cite{Birka_PPKidneyExchange_2022} is not data oblivious as its flow of execution depends on  the number of cycles in the private compatibility graph. Thus, the protocol run time also depends on the number of cycles in the graph. 
Finally, our evaluation of the approach from~\cite{Birka_PPKidneyExchange_2022} (Section~\ref{sub:eval_quality}) shows that the approximation quality of~\cite{Birka_PPKidneyExchange_2022} is worse than for our approach for up to $ 30 $ patient-donor pairs.\footnote{Note that \cite{Birka_PPKidneyExchange_2022} only evaluates the run time performance of their protocol but not the quality of the approximation.} While for more pairs the quality of~\cite{Birka_PPKidneyExchange_2022} is on average about $ 3.9 $\% better than for our approach, recall that \cite{Birka_PPKidneyExchange_2022} takes more than $ 24 $ hours for $ 40+ $ pairs. This is unacceptable in practice as there are kidney exchange platforms that execute match runs on a daily~basis~\cite{AshlagiKidneyExchangeOperations2021}.

\subsection{Privacy-Preserving Approximation Protocols for Graph Problems}\label{sub:rw_ppApprox}

Brüggemann et al.~\cite{Bruggemann_MaxMatchingApproximation_2022} present an SMPC protocol for approximating a maximum weight matching on a general graph. Their protocol also follows a Greedy approach in that it repeatedly adds the edge of maximum weight to the solution and then removes all edges from the graph that are incident to the chosen edge. Thereby, their protocol achieves a $ 1/2- $approximation for the maximum weight matching problem on general graphs. For the special case that only cycles of size $ 2 $ are allowed, the KEP can be reduced to the maximum weight matching problem on general graphs. 
Thus, their protocol could be used as a building block for computing an approximate solution for the KEP for a maximum cycle size of $ 2 $. However, it cannot be used for computing an approximate solution for the KEP for larger cycles. In contrast to this, our protocol finds an approximate solution for the KEP for cycles of size up to $ 3 $, which is the common setting for kidney exchange in practice~\cite{AshlagiKidneyExchangeOperations2021,Biro_EuropeanModellingKE_2019}. 
\section{Privacy-Preserving Approximation of the Kidney Exchange Problem}\label{sec:protocol}

In this section, we present our novel protocol $ \kepApProtocol $ that computes an approximate solution for the KEP.

\subsection{Approach and Ideal Functionality}\label{sub:ideal_func}

Intuitively, one would design a Greedy algorithm for computing an approximate solution for the KEP by repeatedly adding the cycle of maximum weight to the solution until the compatibility graph no longer contains a cycle that is disjoint from all previously chosen cycles.
However, in the privacy-preserving setting we have to hide which cycles exist in the compatibility graph. Thus, we have to consider all \emph{potential cycles} that could exist between the patient-donor pairs, i.e., all cycles in a complete graph with $ \numparties $ nodes. 
This yields $ \sum_{i = 2}^{3} \frac{\numparties !}{(\numparties - i)! \cdot i} $ potential cycles for a maximum cycle size of~$ 3 $.

We can reduce the size of this set to $ \sum_{i = 2}^{3} \frac{\numparties !}{(\numparties - i) ! \cdot i !} $ by only considering all different subsets of nodes that could form an exchange cycle. In particular, each subset $ \{u, v, w\} $ of three nodes can yield two different cycles $ (u, v, w) $ and $ (u, w, v) $. 
Thus, instead of considering the set of all potential cycles in each iteration, we can first evaluate the cycle of maximum weight for each subset and then iteratively add the subset of maximum weight to the solution. 

Algorithm~\ref{alg:greedy} shows the Greedy approach for computing an approximate solution for the KEP using the set of all subsets instead of exchange cycles. The algorithm takes the compatibility graph $ \compGraph $ and the set $ \subsets $ of all subsets of size $ 2 $ and $ 3 $ as input. The set $ \subsets $ is ordered such that it contains all subsets of size $ 3 $ followed by all subsets of size $ 2 $, each sorted in ascending order by their~nodes.

As a first step, the nodes of the compatibility graph are shuffled uniformly at random. This ensures that the index of a patient-donor pair does not influence the probability that the pair is included in the computed solution. 
Then, the maximum weight cycle for each subset is determined and stored in the map $ \subsetCycleMap $. 
In each iteration of the main loop, the first subset of maximum weight is determined and added to the solution set $ \chosenSets $. 
Afterwards, all subsets that share at least one node with the chosen subset $ \singleSubset_i $ are removed from~$ \subsets $. The main loop terminates when there are no more subsets of weight larger than $ 0 $ in $ \subsets $. As a last step, the chosen subsets $ \chosenSets $ are mapped back to the corresponding cycles~$ \chosenCycles $ using the previously computed map $ \subsetCycleMap $.

\begin{algorithm}[t]
	\algrenewcommand\algorithmicindent{0.7em}
\small
\begin{flushleft}
	\textbf{Input:} Compatibility graph $ \compGraph = (V, E, \weightFunction) $, subsets~$ \subsets $ of size $ 2 $ and~$ 3 $ \vspace*{0.2em} \\
	\textbf{Output:} Set $ \chosenCycles $ of disjoint exchange cycles from $ \cg $
\end{flushleft}
\begin{algorithmic}[1]
	\State Shuffle $ \cg $ uniformly at random
	\For{$ s \in \subsets $}
	\vspace*{-0.2em}
		\State Let $ \singleCycle_{\textit{max}} $ be the first cycle of max.\ weight for the nodes in $ s $
		\State $ \subsetCycleMap(s) \leftarrow \singleCycle_{\textit{max}} $
		\State $ \cycleWeight(s) \leftarrow \textit{weight}(\singleCycle_{\textit{max}}) $
	\EndFor
	\vspace*{-0.1em}
	\State $ \chosenSets \leftarrow \emptyset $
	\vspace*{-0.1em}
	\While{$ \exists s \in \subsets : \cycleWeight(s) > 0 $}
		\State Let $ \singleSubset_{\textit{max}} $ be the first subset of maximum weight in $ \subsets $
		\State $ \chosenSets \leftarrow \chosenSets \cup \{\singleSubset_{\textit{max}}\} $
		\State $ \subsets \leftarrow \subsets \setminus \{\singleSubset \in \subsets \ | \ \singleSubset \cap \singleSubset_{\textit{max}} \neq \emptyset \} $
	\vspace*{-0.1em}
	\EndWhile
	\vspace*{-0.1em}
	\State $ \chosenCycles \leftarrow \{\subsetCycleMap(s) \ | \ s \in \chosenSets\} $
	\vspace*{-0.2em}
\end{algorithmic}

	\caption{Greedy Approximation Algorithm for the KEP}
	\label{alg:greedy}
\end{algorithm}

Note that by always choosing the first subset of maximum weight, we inherently choose the larger cycle if there are multiple cycles of maximum weight. This is in line with the common criterion to always maximize the number of transplants in kidney exchange~\cite{AshlagiKidneyExchangeOperations2021,Biro_EuropeanModellingKE_2019}. If this behavior is not desired for some reason, one can simply shuffle the set $ \subsets $ uniformly at random before the while loop in Algorithm~\ref{alg:greedy}. Thereby, one chooses a subset uniformly at random from all subsets of maximum weight. In the privacy-preserving~case, this induces a negligible run time overhead as $ \subsets $ is publicly known and thus can be shuffled without any computation on secret values.

The theoretical approximation ratio of Algorithm~\ref{alg:greedy} is $ 1/3 $. Intuitively, this holds as choosing a cycle of maximum weight $ \weight_m $ in the approximate solution can at most prevent the choice of three alternative cycles of weight $ \weight \leq \weight_m $ in the optimal solution. Thus, the weight of the optimal solution is at most three times the weight of the approximate solution. A formal proof is given in Appendix~\ref{app:proof}.

The corresponding ideal functionality that is implemented by our novel protocol $ \kepApProtocol $ is defined as follows. 

\begin{functionality}[$ \kepApFunc $ - Approximation of the KEP]\label{func:kepAp}
	\textit{
		Let all computing peers hold their respective shares of the secret quotes $ \enc{\inputQuote{i}} $ containing the medical data of the patient-donor pairs $ \partysymbol{i} $ ($ i \in \parties $) that is relevant for computing their medical compatibility and the weights in the compatibility graph. Further, let $ \subsets $ be the publicly known set of subsets for the patient-donor pairs $ \partysymbol{i} $ ($ i \in \parties $), containing all subsets of size $ 3 $ followed by all subsets of size $ 2 $ sorted in ascending order by their nodes. Then, functionality~$ \kepApFunc $ is given~as 
		\begin{equation*}
			(\enc{d_0}, \enc{r_0}), ..., (\enc{d_{\numparties - 1}}, \enc{r_{\numparties - 1}}) \leftarrow \kepApFunc(\enc{\inputQuote{0}}, ..., \enc{\inputQuote{\numparties -1}}, \subsets),
		\end{equation*}  
		where $ (d_i, r_i) $ are the indices of donor and recipient for pair~$ \partysymbol{i} $ ($ i \in \parties $) for a set of exchange cycles of maximum cycle size $ \cycleBound = 3 $ computed as specified in Algorithm~\ref{alg:greedy}. 
	}
\end{functionality}

As the two common criteria for compatibility in kidney exchange are blood type and Human Leukocyte Antigen (HLA) compatibility~\cite{AshlagiKidneyExchangeOperations2021}, the quote~$ \inputQuote{i} $ has to include four binary indicator vectors: $ \bloodtypeIndicatorVector{i}{d} $ for the donor blood type, $ \bloodtypeIndicatorVector{i}{p} $ for the patient blood type, $ \antigenvec{i} $ for the donor's HLA, and $ \antibodyvec{i} $ for the patient's HLA antibodies.
The data required for computing the weights in the compatibility graph depends on the particular criteria that are used for prioritization. As these vary widely among existing kidney exchange platforms~\cite{AshlagiKidneyExchangeOperations2021,Biro_EuropeanModellingKE_2019}, we keep them generic for our protocol description. The criteria used in our run time evaluation are further discussed in Section~\ref{sub:eval_runtime}.

\subsection{Maximum Weight Subset Computation}\label{sub:heaviest_cycle}

Since our protocol~$ \kepApProtocol $ follows the Greedy approach outlined in Algorithm~\ref{alg:greedy}, it requires a gate $ \searchHeaviestCycle $ that finds the first subset $ \singleSubset_i $ in the set $ \subsets $ that has maximum weight.

Gate~\ref{gat:heaviest_cycle} provides the specification of gate~$ \searchHeaviestCycle $, which takes a set of $ n $ subsets as input. These are encoded by a secret vector $ \enc{\cycleIndices} $ storing the unique index of each subset and a secret matrix $ \enc{\kepApNodes} $, where $ \enc{\kepApNodes}(i) = \{\enc{u}, \enc{v}, \enc{w}\} $ encodes the nodes of the $ i $-th subset. The weights for each subset are stored in the secret vector $ \enc{\cycleWeight} $. The gate then returns the index and the nodes of a subset with maximum weight. If there are multiple such subsets, the one with the minimal index is returned. If there is no subset of weight larger than $ 0 $, the dummy subset $ (\enc{\numparties}, \enc{\numparties}, \enc{\numparties}) $ with the dummy index $ \enc{\vert \subsets \vert} $ is returned. Note that we use zero indexing and thus can use the values $ \numparties $ and $ \vert \subsets \vert $ as dummy values. 

To minimize the number of communication rounds, we use a tree reduction approach as outlined in~\cite{Catrina_PrimitivesSMPC_2010}. In each recursion, we first check if the size $ n $ of the set of subsets is equal to~$ 1 $. If this is the case, we have found a subset of maximum weight. We return this subset if it is valid, i.e., if its weight is larger than $ 0 $. Otherwise, we return the dummy subset $ (\enc{\numparties}, \enc{\numparties}, \enc{\numparties}) $ with index $ \enc{\vert \subsets \vert} $. Note that we keep the actual size of a subset (i.e., whether it comprises two or three nodes) hidden by always encoding a subset by three secret nodes. If the $ i $-th subset only contains two nodes, the third node (stored in $ \enc{\kepApNodes}(i, 2) $) contains the dummy value~$ \enc{\numparties} $. If $ n $ is still larger than $ 1 $, we have not yet found a subset of maximum weight. In that case, we compare each subset at index~$ 2i $ with its neighbor at index $ 2i + 1 $. If the weight of the subset with index~$ 2i $ is larger or equal to the weight of the subset with index $ 2i + 1 $, we select the first subset (i.e., the subset with index $ 2i $). Otherwise, we select the subset with index $ 2i + 1 $. We store the index of the selected subset in the vector $ \enc{\cycleIndices'} $, its nodes in the matrix $ \enc{\kepApNodes'} $, and its weight in the vector $ \enc{\cycleWeight'} $. Thereby, in each recursion we effectively half the set of subsets that still have to be evaluated. If this set is of odd size, we just store the values for the last subset of the current iteration as the last entry of $ \enc{\cycleIndices'} $, $ \enc{\kepApNodes'} $, and $ \enc{\cycleWeight'} $. Finally, we recursively call the gate $ \searchHeaviestCycle $ on the new set of subsets encoded by $ \enc{\cycleIndices'} $, $ \enc{\kepApNodes'} $, and $ \enc{\cycleWeight'} $. 

\begin{gate}[t]
	\algrenewcommand\algorithmicindent{0.7em}
\small
\begin{flushleft}
	\textbf{Input:} Set of $ n $ subsets with secret indices $ \enc{\cycleIndices} $, secret node sets $ \enc{\kepApNodes} $, and secret weights $ \enc{\cycleWeight} $ \vspace*{0.2em} \\
	\textbf{Output:} Secret index $ \enc{\cycleIndices}(i) $ and set of nodes $ \enc{\kepApNodes}(i) $ of a subset where $ i $ is the lowest index for which $ \cycleWeight > 0 $ is maximal, or $ \enc{\vert \subsets \vert} $ and $ (\enc{\numparties}, \enc{\numparties}, \enc{\numparties}) $ if $ \cycleWeight = 0 $ for all $ 0 \leq i < n $
\end{flushleft}
\begin{algorithmic}[1]
	\If{$ n = 1 $}
	\vspace*{-0.3em}
		\State $ \enc{\textit{valid}} \leftarrow \enc{\cycleWeight}(0) \sharedGreaterThan \enc{0} $
		\State $ \enc{\indexHeaviest} \leftarrow \enc{\textit{{valid}}} \ ? \ \enc{\cycleIndices}(0) \ : \ \enc{\vert \subsets \vert} $
		\State $ \enc{\heaviest} \leftarrow \enc{\textit{valid}} \ ? \ \enc{\kepApNodes}(0) \ : \ (\enc{\numparties}, \enc{\numparties}, \enc{\numparties}) $
		\State \Return $ \enc{\indexHeaviest}, \enc{\heaviest} $
		\vspace*{-0.1em}
	\Else
	\vspace*{-0.1em}
		\For{$ 0 \leq i < \lfloor n / 2 \rfloor $ \textbf{in parallel}}
			\State $ \enc{\textit{first}} \leftarrow (\enc{\cycleWeight}(2 i) + \enc{1}) \sharedGreaterThan \enc{\cycleWeight}(2 i + 1) $
			\State $ \enc{\kepApNodes'}(i) \leftarrow \enc{\textit{first}} \ ? \ \enc{\kepApNodes}(2i) \ : \ \enc{\kepApNodes}(2i + 1) $
			\State $ \enc{\cycleIndices'}(i) \leftarrow \enc{\textit{first}} \ ? \ \enc{\cycleIndices}(2i) \ : \ \enc{\cycleIndices}(2i + 1) $
			\State $ \enc{\cycleWeight'}(i) \leftarrow \enc{\textit{first}} \ ? \ \enc{\cycleWeight}(2i) \ : \ \enc{\cycleWeight}(2i + 1) $
		\EndFor
		\If{$  \lceil n / 2 \rceil > \lfloor n / 2 \rfloor $}	
			\State $ \enc{\kepApNodes'}( \lceil n / 2 \rceil - 1) \leftarrow \enc{\kepApNodes}(n - 1) $
			\State $ \enc{\cycleIndices'}( \lceil n / 2 \rceil - 1) \leftarrow \enc{\cycleIndices}(n - 1) $
			\State $ \enc{\cycleWeight'}( \lceil n / 2 \rceil - 1) \leftarrow \enc{\cycleWeight}(n - 1) $
		\EndIf
		\vspace*{-0.2em}
		\State \Return $ \searchHeaviestCycle(\enc{\cycleIndices'}, \enc{\kepApNodes'}, \enc{\cycleWeight'}) $
		\vspace*{-0.1em}
	\EndIf
\end{algorithmic}
\vspace*{-0.1em}

	\caption{$ \searchHeaviestCycle $}
	\label{gat:heaviest_cycle}
\end{gate}

\paragraph{Security}
The gate $ \searchHeaviestCycle $ is fully data oblivious, i.e., its flow of execution is independent of the input and it only uses gates that have already been proven secure.
The gate can thus be simulated by applying the composition theorem from~\cite{Canetti_UC_2001} and following the publicly known flow of execution.

\paragraph{Complexity}
The gate has communication complexity $ \bigo{n \bitsize} $ and round complexity $ \bigo{\log n \log \bitsize} $, where $ n $ is the number of subsets.

\begin{protocolDouble}[t]
	\algrenewcommand\algorithmicindent{0.7em}
\small
\begin{flushleft}
	\textbf{Input:} Each hospital shares the input quote $ \enc{\inputQuote{i}} $ for each of its registered patient-donor pairs $ \partysymbol{i} $ ($ i \in \parties $) with the computing peers. Each computing peer also holds a public set $ \subsets $ of all potential subsets of $ \parties $ of size $ 2 $ and $ 3 $  and the corresponding node set $ \kepApNodes $ with $ \kepApNodes(i) = (u, v, w) $ for $ i \in \{0, ..., \vert \subsets \vert\} $ \vspace*{0.2em} \\
	\textbf{Output:} Each computing peer sends the shares of the exchange partners $ \enc{D}(i) $ and $ \enc{R}(i) $ to the corresponding hospital of patient-donor pair $ \partysymbol{i} $ ($ i \in \parties $)
\end{flushleft}
\vspace*{-1.4em}
\begin{multicols}{2}
	\begin{algorithmic}[1]
		\Statex \underline{Construction Phase}
		\For{$ 0 \leq i \neq j < \numparties $ \textbf{in parallel}}
			\State $ \enc{\compGraphMatrix}(i, j) \leftarrow \compCheck(\enc{\inputQuote{i}}, \enc{\inputQuote{j}}) $
			\vspace*{0.15em}
			\State $ \enc{\prioMatrix}(i, j) \leftarrow \prioGate(\enc{\inputQuote{i}}, \enc{\inputQuote{j}}) $
		\EndFor
		\vspace*{0.15em}
		\State $ \enc{\compGraphMatrix'} \leftarrow \shuffleGate(\enc{\compGraphMatrix}, \enc{\shufflePermutation}) $ 
		\vspace*{0.1em}
		\State $ \enc{\prioMatrix'} \leftarrow \shuffleGate(\enc{\prioMatrix}, \enc{\shufflePermutation}) $
		\vspace*{0.1em}
		\For{$ 0 \leq i < \vert \subsets \vert $}
			\State $ \enc{\chosenSets}(i) \leftarrow \enc{0} $
			\State $ \enc{\cycleIndices}(i) \leftarrow \enc{i} $
		\EndFor
%		\vspace*{0.1em}
		\Statex \underline{Evaluation Phase}
		\For{$ 0 \leq i < \vert \subsets \vert $ \textbf{in parallel}}
			\State $ u, v, w \leftarrow \kepApNodes(i) $
			\If{$ w = \numparties $}
				\State $ \enc{\cycleWeight}(i) \leftarrow \enc{\compGraphMatrix}(u, v) \sharedMult \enc{\compGraphMatrix}(v, u)  $
				\Statex \hspace*{6.9em} $ \sharedMult (\enc{\prioMatrix}(u, v) + \enc{\prioMatrix}(v, u)) $
			\Else
				\State $ \enc{\firstCycle} \leftarrow \enc{\compGraphMatrix}(u, v) \sharedMult \enc{\compGraphMatrix}(v, w) \sharedMult \enc{\compGraphMatrix}(w, u) $
				\Statex \hspace*{4.4em} $ \sharedMult (\enc{\prioMatrix}(u, v) + \enc{\prioMatrix}(v, w) + \enc{\prioMatrix}(w, u)) $
				\vspace*{0.2em}
				\State $ \enc{\secondCycle} \leftarrow \enc{\compGraphMatrix}(u, w) \sharedMult \enc{\compGraphMatrix}(w, v) \sharedMult \enc{\compGraphMatrix}(v, u) $
				\Statex \hspace*{4.4em} $ \sharedMult (\enc{\prioMatrix}(u, w) + \enc{\prioMatrix}(w, v) + \enc{\prioMatrix}(v, u)) $
				\vspace*{0.2em}
				\State $ \enc{\chooseFirst}(i) \leftarrow \enc{\firstCycle} \sharedGreaterThanEqual \enc{\secondCycle} $
				\vspace*{0.1em}
				\State $ \enc{\cycleWeight}(i) \leftarrow \enc{\chooseFirst}(i) \ ? \ \enc{\firstCycle} \ : \ \enc{\secondCycle} $
%				\vspace*{0.1em}
%				\State $ \enc{\subsetCycleMap}(i, 0) \leftarrow \enc{\chooseFirst} $
%				\vspace*{0.1em}
%				\State $ \enc{\subsetCycleMap}(i, 1) \leftarrow \enc{1} \sharedSubtract \enc{\chooseFirst} $
			\EndIf
		\EndFor
		\Statex \underline{Approximation Phase}
		\vspace*{0.1em}
		\For{$ \lfloor \numparties / 2 \rfloor $}
			\State $ \enc{\indexHeaviest}, \enc{\heaviest} \leftarrow \searchHeaviestCycle(\enc{\cycleIndices}, \enc{\kepApNodes}, \enc{\cycleWeight}) $
			\vspace*{0.1em}
			\State $ \enc{\chosenSets} \leftarrow \enc{\chosenSets} + \demuxGate{\vert \subsets \vert}(\enc{\indexHeaviest}) $ 
			\vspace*{0.1em}
			\If{not last iteration}
				\For{$ 0 \leq i < 3 $ \textbf{in parallel}}
					\State $ \enc{\indicatorEmpty}(i) \leftarrow \demuxGate{\numparties}(\enc{\heaviest}(i)) $
				\EndFor
				\State $ \enc{\indicatorCombined} \leftarrow \enc{\indicatorEmpty}(0) + \enc{\indicatorEmpty}(1) + \enc{\indicatorEmpty}(2) $%\sharedMult (1 - \enc{\indicatorEmpty}(2, \numparties)) $
				\For{$ 0 \leq i < \vert \subsets \vert $ \textbf{in parallel}}
					\State $ \enc{\cycleWeight}(i) \leftarrow \enc{\cycleWeight}(i) \sharedMult \prod_{v \in \kepApNodes(i)} (1 - \enc{\indicatorCombined}(v)) $ 
				\EndFor
				\vspace*{-0.2em}
			\EndIf
			\vspace*{-0.2em}
		\EndFor
		\Statex \underline{Resolution Phase}
		\For{$ 0 \leq i < \vert \subsets \vert $ \textbf{in parallel}}
			\State $ \enc{\subsetCycleMap}(i, 0) \leftarrow \enc{\chooseFirst}(i) \sharedMult \enc{\chosenSets}(i) $
			\State $ \enc{\subsetCycleMap}(i, 1) \leftarrow (\enc{1} - \enc{\chooseFirst}(i)) \cdot \enc{\chosenSets}(i) $
		\EndFor
%		\vspace*{0.1em}
		\For{$ 0 \leq i < \vert \subsets \vert $}
			\State $ u, v, w \leftarrow \kepApNodes(i) $
			\If{$ w = \numparties $}
				\State $ \enc{\solutionMatrix}(u, v) \leftarrow \enc{\solutionMatrix}(u, v) + \enc{\chosenSets}(i) $
				\State $ \enc{\solutionMatrix}(v, u)\hspace*{0.074em} \leftarrow \enc{\solutionMatrix}(v, u)\hspace*{0.074em} + \enc{\chosenSets}(i) $
			\Else
					\State $ \enc{\solutionMatrix}(u, v)\hspace*{0.1485em} \leftarrow \enc{\solutionMatrix}(u, v)\hspace*{0.1485em} + \enc{\subsetCycleMap}(i, 0) $
					\State $ \enc{\solutionMatrix}(v, w)\hspace*{0.055em} \leftarrow \enc{\solutionMatrix}(v, w)\hspace*{0.055em} + \enc{\subsetCycleMap}(i, 0) $
					\State $ \enc{\solutionMatrix}(w, u) \leftarrow \enc{\solutionMatrix}(w, u) + \enc{\subsetCycleMap}(i, 0) $
					\State $ \enc{\solutionMatrix}(u, w)\hspace*{-0.075em} \leftarrow \enc{\solutionMatrix}(u, w)\hspace*{-0.075em} + \enc{\subsetCycleMap}(i, 1) $
					\State $ \enc{\solutionMatrix}(w, v)\hspace*{0.055em} \leftarrow \enc{\solutionMatrix}(w, v)\hspace*{0.055em} + \enc{\subsetCycleMap}(i, 1) $
					\State $ \enc{\solutionMatrix}(v, u)\hspace*{0.2225em} \leftarrow \enc{\solutionMatrix}(v, u)\hspace*{0.2225em} + \enc{\subsetCycleMap}(i, 1) $
			\EndIf
			\vspace*{-0.125em}
		\EndFor 
		\State $ \enc{\solutionMatrix'} \leftarrow \reverseShuffleGate(\enc{\solutionMatrix}, \enc{\perm}) $
%		\vspace*{0.1em}
		\For{$ 1 \leq i \leq \numparties $ \textbf{in parallel}}
			\vspace*{0.1em}
			\State $ \enc{\outputDonors}(i) \leftarrow \sum_{j=1}^{\numparties}(j \sharedScalMult \enc{\solutionMatrix'}(j-1, i-1)) $
			\vspace*{0.2em}
			\State $ \enc{\outputRecipients}(i) \leftarrow \sum_{j=1}^{\numparties}(j \sharedScalMult \enc{\solutionMatrix'}(i-1, j-1)) $
%			\vspace*{0.1em}
		\EndFor
	\end{algorithmic}
\end{multicols}
\vspace*{-1.4em}
	\caption{$ \kepApProtocol $}
	\label{prot:kep_approx}
\end{protocolDouble}

\subsection{Protocol Specification}\label{sub:protocol_spec}
Protocol~\ref{prot:kep_approx} shows the specification of protocol $ \kepApProtocol $ which implements functionality~$ \kepApFunc $ (cf.~Functionality~\ref{func:kepAp}). The protocol consists of four phases which are described in the following. 

\paragraph{Construction Phase}
We first construct the compatibility graph based on the input quotes of the patient-donor pairs. To this end, we require a gate for determining whether the donor of a pair~$ \partysymbol{i} $ is medically compatible with the patient of another pair~$ \partysymbol{j} $ as well as a gate for computing the weight of the edge~$ (i, j) $ which is used for prioritization between different exchange cycles. 

We use the gate~$ \compCheck $ from~\cite{Breuer_KEprotocol_2020} for compatibility check based on the criteria blood type and HLA compatibility. This yields the matrix $ \enc{\compGraphMatrix} $ where $ \compGraphMatrix(i, j) = 1 $ iff the donor of~$ \partysymbol{i} $ and the patient of~$ \partysymbol{j} $ are medically compatible and $ \compGraphMatrix(i, j) = 0 $, otherwise. Thus, $ \compGraphMatrix(i, j) $ indicates if the edge $ (i, j) $ exists in the compatibility graph.

Since the criteria that are considered for prioritization vary widely across the existing kidney exchange platforms, we keep the gate~$ \prioGate $ generic and only provide a specific set of criteria for the run time evaluation (cf.~Section~\ref{sub:eval_runtime}). The gate~$ \prioGate $ is then used to compute the prioritization matrix~$ \enc{\prioMatrix} $ such that $ \enc{\prioMatrix}(i, j) $ encodes the weight of the edge $ (i, j) $ from the donor of pair~$ \partysymbol{i} $ to the patient of pair~$ \partysymbol{j} $. Afterwards, the matrices $ \enc{\compGraphMatrix} $ and $ \enc{\prioMatrix} $ are shuffled to guarantee for unbiasedness in the sense that the index of a patient-donor pair does not influence the probability that the pair gets matched. Then, we initialize the secret vector $ \enc{\chosenSets} $ to store for each subset whether it has been chosen or not, and the secret vector $ \enc{\cycleIndices} $ to store the index of each subset in~$ \subsets $.

\paragraph{Evaluation Phase}
We set the weight of each subset $ \singleSubset \in \subsets $ to the maximum weight of the exchange cycles that consist of the nodes from $ s $.
A subset $ s = \{u, v\} $ of size $ 2 $ is represented as a 3-tuple where the last entry $ w $ has the dummy value $ \numparties $. This allows us later on to hide whether a chosen subset is of size $ 2 $ or~$ 3 $. While for subsets of size $ 2 $, there is only one possible cycle, for each subset $ \{u, v, w\} $ of size $ 3 $, we have to evaluate the two cycles $ (u, v, w) $ and $ (u, w, v) $. We compute the weight of a cycle based on the matrices $ \enc{\compGraphMatrix} $ and $ \enc{\prioMatrix} $.
First, we use $ \enc{\compGraphMatrix} $ to check whether the cycle is executable, i.e., whether all its edges are present in the compatibility graph. To this end, we multiply the corresponding entries of $ \enc{\compGraphMatrix} $ with each other. The actual weight of the cycle is computed as the sum of the edge weights stored in the prioritization matrix $ \enc{\prioMatrix} $.
For each subset $ s $ of size $ 3 $, we choose the first cycle of maximum weight and set the weight of $ s $ to the weight of the chosen cycle. 
We then store for each subset, whether the first or the second cycle has been chosen, i.e., $ \enc{\chooseFirst}(i) = \enc{1} $ iff the first cycle has been chosen. Later on, this allows us to revert the mapping from cycles to subsets.

\paragraph{Approximation Phase}
We iteratively add the subset of maximum weight to the solution set~$ \enc{\chosenSets} $. At the beginning of each iteration, we use the gate $ \searchHeaviestCycle $ to obtain the index $ \enc{\indexHeaviest} $ and the nodes~$ \enc{\heaviest} $ of the subset of maximum weight. 
Then, we use the gate $ \demuxGate{\numparties} $ to obtain the binary indicator vector for the index of the chosen subset. This allows us to efficiently add the chosen subset to the solution set~$ \enc{\chosenSets} $. 

Unless we are in the last iteration, we have to update the subset weights afterwards such that the weight of all subsets that share a node with the chosen subset is set to $ \enc{0} $. To this end, we convert each node of the chosen subset into a binary indicator vector such that $ \enc{\indicatorEmpty}(i) $ is a vector whose $ j $-th entry is equal to $ \enc{1} $ iff $ \partysymbol{j} $ is the $ i $-th pair in the subset. All other entries of the vector are equal to~$ \enc{0} $. We then combine the three indicator vectors for the nodes of the chosen subset into a single indicator vector $ \enc{\indicatorCombined} $. 
To update the subset weights, we iterate over all subsets $ s = \{u, v, w\} \in \subsets $ and set the weight to $ \enc{0} $ if the indicator vector $ \enc{\indicatorCombined} $ is equal to $ \enc{1} $ for at least one of the nodes $ u, v, w $, i.e., if one of these nodes is part of the previously chosen subset. This can be efficiently achieved by multiplying the weight of the $ i $-th subset with $ \enc{1} - \enc{\indicatorCombined}(v) $ for each node in the subset. Note that this difference is equal to $ \enc{0} $ iff node $ v $ is part of the chosen subset. 
Recall that the subset returned by $ \searchHeaviestCycle $ is padded with the dummy node $ \vert \numparties \vert $ if it is empty or of size $ 2 $. In such a case, the last entry of $ \enc{\indicatorCombined} $ is equal to $ \enc{1} $. However, this does not influence the computation of the new weights as the
dummy node is not part of any subset in $ \subsets $.

Overall, we execute $ \lfloor \numparties / 2 \rfloor $ iterations of the main loop. This corresponds to the worst case as in each iteration we choose one~subset of size at least $ 2 $ as long as there are subsets of weight larger than $ 0 $. Our protocol becomes entirely data oblivious by executing dummy iterations if there are no more subsets of weight larger than~0.

\paragraph{Resolution Phase}
We compute the secret mapping from subsets of size $ 3 $ to cycles of size $ 3 $ such that $ \enc{\subsetCycleMap}(i, j) $ indicates whether the $ j $-th cycle of the $ i $-th subset is part of the computed solution. 
Recall that $ \enc{\chooseFirst}(i) = \enc{1} $ iff the first cycle has been chosen. Thus, we can multiply $ \enc{\chooseFirst}(i) $ with $ \enc{\chosenSets}(i) $ to obtain $ \enc{\subsetCycleMap}(i, 0) $. 
Similarly, we compute $ \enc{\subsetCycleMap}(i, 1) $ using $ \enc{1} - \enc{\chooseFirst}(i) $.

Then, we compute the solution matrix $ \solutionMatrix $ where $ \enc{\solutionMatrix}(u, v) $ states that the donor of the $ u $-th patient-donor pair donates her kidney to the patient of the $ v $-th pair. For subsets of size $ 2 $, we add $ \enc{\chosenCycles}(i) $ to those entries of $ \enc{\solutionMatrix} $ that correspond to the edges of the $ i $-th cycle. Thus, if the cycle is part of the solution, these entries are set to $ \enc{1} $. For the subsets of size $ 3 $, we use the subset map $ \enc{\subsetCycleMap} $ in a similar fashion to determine for all edges in the two cycles $ (u, v, w) $ and $ (u, w, v) $ if they are in the~solution. 

Afterwards, we revert the initial shuffling of the compatibility graph by applying the gate $ \reverseShuffleGate $ to the solution matrix $ \enc{\solutionMatrix} $. Finally, we iterate over each row $ i $ in $ \enc{\solutionMatrix} $ and sum up the product of each entry $ \enc{\solutionMatrix}(i-1, j-1) $ and the public index $ j $. Since each row in $ \enc{\solutionMatrix} $ contains at most one entry that is equal to $ 1 $, this sum yields the index of the recipient for the donor of pair $ \partysymbol{i} $. Similarly, we use the columns of $ \enc{\solutionMatrix} $ to determine the donor for each pair.

\paragraph{Security}
As protocol~$ \kepApProtocol $ is data oblivious and only calls secure gates, it can be simulated by applying the composition theorem from~\cite{Canetti_UC_2001} and following the known flow of execution.

\paragraph{Complexity}
The protocol~$ \kepApProtocol $ includes $ \lfloor \numparties / 2 \rfloor $ calls to the gate $ \searchHeaviestCycle $, which requires $ \bigo{\numparties^3 \bitsize} $ communication and $ \bigo{\log \numparties^3 \log \bitsize} $ rounds for $ \numparties $ patient-donor pairs and maximum cycle size $ 3 $. Thus, overall the protocol $ \kepApProtocol $ has communication complexity $ \bigo{\numparties^4 \bitsize} $ and round complexity $ \bigo{\numparties \log \numparties^3 \log \bitsize} $. 
\section{Evaluation}\label{sec:evaluation}
We evaluate our protocol $ \kepApProtocol $ in terms of run time performance and approximation quality using a real-world data set from the United Network for Organ Sharing (UNOS).
The data set comprises the data of 2,913 unique patient-donor pairs that registered with UNOS between 27th October 2010 and 29th December 2020.

\subsection{Evaluation of the Run Time Performance}\label{sub:eval_runtime}

\paragraph{Setup}
We implement our protocol in the SMPC benchmarking framework MP-SPDZ~\cite{KellerMPSPDZ2020} version 0.3.5. We consider the same setup as~\cite{Breuer_Matching_2022,Breuer_KepIp_2022}, where each computing peer runs on an LXC container with Ubuntu 20.04 LTS, one core, and 32 GB RAM. One extra container is used for managing input and output for the patient-donor pairs. All containers run on a single server. 

While we kept the criteria for prioritization generic in the specification of the~protocol $ \kepApProtocol $ (cf.~Section~\ref{sub:protocol_spec}), for the evaluation we have to decide~on~a specific set of criteria. In order to use realistic~values for the input of the patient-donor pairs, we consider all criteria~that can be derived from our real-world data set from UNOS, i.e., the patient's age, calculated panel reactive antibody, pediatric status, and prior living donor status, as well as common HLA or blood type of patient and donor, age difference of patient and donor, and their region (i.e., geographic proximity). 

We evaluate our protocol for three different SMPC primitives: (1) \texttt{rep-ring}~\cite{Araki_ReplicatedSecretSharing_2016}, which provides for security in the semi-honest model with honest majority, (2) \texttt{ps-rep-ring}~\cite{Eerikson_UseYourBrain_2020}, which provides for security in the malicious model with honest majority, (3) \texttt{semi2k}, which is an adaptation of~\cite{Cramer_Spdz2k_2018} providing for security in the semi-honest model with dishonest majority. The latter thus considers the same security model as~\cite{Birka_PPKidneyExchange_2022}, which allows for a fair comparison between~\cite{Birka_PPKidneyExchange_2022} and our protocol~$ \kepApProtocol $. We use three computing peers for the honest majority case and two for the dishonest majority case. More details on the SMPC primitives are provided in Appendix~\ref{app:smpc_protocols}.

For the network setting, we follow the related work~\cite{Breuer_Matching_2022,Breuer_KepIp_2022,Birka_PPKidneyExchange_2022}~and consider a LAN setting with a bandwidth of $ 1 $Gbps and a latency of $ 1 $ms. 
We also evaluate our protocol in a more realistic scenario which we call the WAN setting. We still deem a bandwidth of $ 1 $Gbps realistic as the computing peers are hosted by large institutions with high-bandwidth internet connection. However, with increased distance, the latency between the computing peers increases. In the WAN setting, we use a latency of $ 20 $ms which can be considered a worst case scenario as the computing peers are hosted by institutions such as universities or large hospitals. We emulate these network settings between the containers using \texttt{tc} (traffic control).

We measure run time and network traffic for different numbers of patient-donor pairs. For each number of pairs, we run $ 10 $ repetitions if the run time is below $ 1 $ hour, and $ 1 $ repetition, otherwise. This suffices as our protocol is data oblivious and thus there are small differences between the run times for multiple repetitions. For example, for $ 100 $ pairs (\texttt{rep-ring}) in the LAN setting, the run times for all $ 10 $ repetitions lie in a range of $ 1 $ second.
For the patient-donor pairs' input, we sample uniformly at random from the UNOS data set. Note, however, that due to the data obliviousness of our protocol, run time and network traffic are independent from~the~input.

\begin{figure}
	\includegraphics*[scale=1.0]{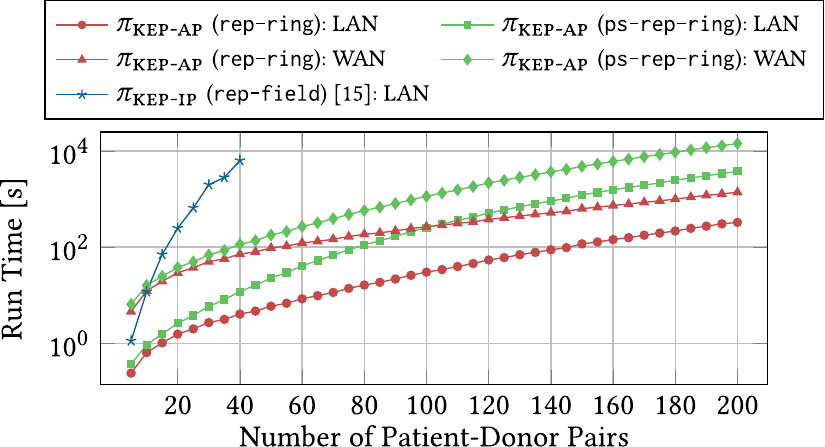}
	\caption{Run time for our protocol~$ \kepApProtocol $ and the protocol $ \kepIpProtocol $~\cite{Breuer_KepIp_2022,Breuer_KepIpExtended_2022} that computes an exact solution for the KEP.
	}
	\label{plt:runtimes}
\end{figure}

\paragraph{Results}
Figure~\ref{plt:runtimes} shows the run time for our protocol~$ \kepApProtocol $ with maximum cycle size $ 3 $ for \texttt{rep-ring} and \texttt{ps-rep-ring}, i.e., considering an honest majority. We provide the evaluation for cycle size $ 2 $ in Appendix~\ref{app:twocycles} and the raw data for Figure~\ref{plt:runtimes} in Appendix~\ref{app:results}. 

We observe that the run time of our protocol~$ \kepApProtocol $ scales sub-exponentially in the number of patient-donor pairs. This goes in line with its theoretical complexities which we determined in Section~\ref{sub:protocol_spec}. 
As expected, the protocol run times are faster for the semi-honest model than for the malicious model. On average the protocol takes about $ 6.2 $ times longer in the malicious model.
When comparing the run times for the different network settings, we observe that the run times in the WAN setting are about $ 10.4 $ times longer than in the LAN setting for the semi-honest model and about $ 6.5 $ times longer for the malicious model. While this is a non-negligible performance overhead, our protocol still runs for less than $ 4 $ hours for $ 200 $ patient-donor pairs in the WAN setting for the malicious model. Considering that in case of existing kidney exchange platforms, the KEP is solved at most once per day~\cite{AshlagiKidneyExchangeOperations2021,Biro_EuropeanModellingKE_2019}, such run times are acceptable in practice. Note that even in a large platform such as UNOS a match run only involves about $ 200 $ pairs.

\paragraph{Comparison to Related Work}
Figure~\ref{plt:runtimes} also shows the run time of the protocol~$ \kepIpProtocol $ from~\cite{Breuer_KepIpExtended_2022,Breuer_KepIp_2022}, which is the most efficient SMPC protocol to date for computing an exact solution for the KEP. The protocol~$ \kepIpProtocol $ is evaluated in~\cite{Breuer_KepIpExtended_2022} using the SMPC primitive~\texttt{rep-field} from~\cite{Araki_ReplicatedSecretSharing_2016} (i.e., \texttt{rep-ring} over $ \field_p $ instead of $ \ints_{2^k} $). We observe that our protocol~$ \kepApProtocol $ achieves superior run times compared to the protocol~$ \kepIpProtocol $. For example, in the LAN setting with the semi-honest model our protocol finishes within $ 4 $~seconds for $ 40 $ pairs whereas the protocol~$ \kepIpProtocol $ requires more than $ 105 $ minutes. This is, however, to be expected as computing an optimal solution for the KEP is an NP-complete problem~\cite{AbrahamClearingAlgorithms2007} and thus our protocol solves a much easier problem than the protocol~$ \kepIpProtocol $.

Figure~\ref{plt:birka_comparison} shows the run time of our protocol~$ \kepApProtocol $ with a cycle size up to $ 3 $ in the semi-honest model with dishonest majority compared to the protocol from~\cite{Birka_PPKidneyExchange_2022} with a fixed cycle size of $ 3 $.\footnote{In addition, in Appendix~\ref{app:twocycles} we compare the two protocols for a cycle size of $ 2 $.} 
Since~\cite{Birka_PPKidneyExchange_2022} use the ABY framework~\cite{Demmler_ABY_2015}, which considers two computing peers, they also consider the semi-honest model with dishonest majority.
We observe that our protocol~$ \kepApProtocol $ achieves significantly faster run times for $ 13 $ or more patient-donor pairs, already outperforming the run times from~\cite{Birka_PPKidneyExchange_2022} by a factor of $ 5.4 $ for $ 15 $ pairs. 
Due to the high RAM consumption, \cite{Birka_PPKidneyExchange_2022} extrapolate their run times for more than $ 18 $ pairs. 
The extrapolated run time for $ 40 $ pairs is more than $ 24 $ hours whereas our protocol~$ \kepApProtocol $ finishes within $ 2 $ minutes for $ 40 $ pairs. 
While the specific run times are not directly comparable due to the implementation in different frameworks, our evaluation still demonstrates that our protocol outperforms the protocol from~\cite{Birka_PPKidneyExchange_2022} by orders of magnitude and that~\cite{Birka_PPKidneyExchange_2022} has a significantly steeper curve.
This holds irrespective of the specific framework. 
We attribute this to the entirely different algorithmic approaches of the two protocols (cf.~Section~\ref{sub:rw_ppKE}). 

\begin{figure}[t]
	\includegraphics*[scale=1.0]{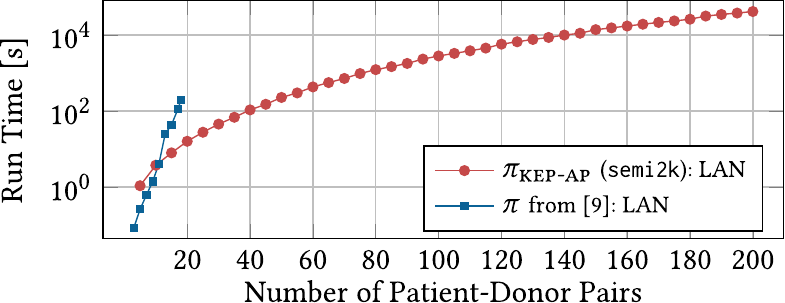}
	\caption{Run time for our protocol~$ \kepApProtocol $ with \texttt{semi2k} and the results from~\cite{Birka_PPKidneyExchange_2022} which uses the ABY framework~\cite{Demmler_ABY_2015}. The latency in the LAN setting from~\cite{Birka_PPKidneyExchange_2022} is~1.3ms.}
	\label{plt:birka_comparison}
\end{figure}

\paragraph{Further Improvements}
An alternative to the three underlying SMPC primitives used in our evaluation is function secret sharing (FSS). The idea of FSS is to additively secret-share a function, which can then be evaluated locally~\cite{Boyle_Fss_2015}. FSS has been shown to achieve extremely fast online run times (e.g., comparisons only require constant rounds and sub-linear communication~\cite{Boyle_FssMixed_2021}). On the other hand FSS requires the generation of a huge amount of keying material, increasing offline communication and online storage~\cite{Boyle_FssMixed_2021}. Still, recent research (especially, in private machine learning~\cite{Jawalker_FssPPML_2023,Wagh_FssPPML_2022,Storrier_Grotto_2023,Gupta_Llama_2022}) has shown a significant performance gain achieved by FSS, sometimes yielding run times up to 693 times faster than for SMPC primitives such as \texttt{rep-ring}~\cite{Jawalker_FssPPML_2023}. While such performance gains cannot be directly translated to our use case, the superior round complexity achieved by FSS suggests that the run time of our protocol~$ \kepApProtocol $ could be further improved using FSS. However, as MP-SPDZ does not support FSS, we leave the evaluation with FSS for future work.
 
\subsection{Evaluation of the Approximation Quality}\label{sub:eval_quality}

\paragraph{Setup}
We evaluate the quality of the approximation obtained for our protocol~$ \kepApProtocol $ as well as for the protocol from~\cite{Birka_PPKidneyExchange_2022}. 
To this end, we implemented both protocols as a conventional algorithm which does not employ any SMPC techniques. For our protocol this corresponds to implementing Algorithm~\ref{alg:greedy}.
The results obtained by these conventional algorithms are exactly the same as for the SMPC protocols. The advantage of the implementation as a conventional algorithm is that we can compute the approximation efficiently for a large number of repetitions. Again, we sample the inputs of the patient-donor pairs uniformly at random from the UNOS data set. For each run, we determine the result obtained by the conventional algorithm as well as the optimal result obtained by an LP solver for the KEP. 
We executed 100 iterations for each number of pairs and computed the approximation quality as the percentage of pairs that are matched in the approximation approach compared to the optimal solution obtained by the LP solver. As the criteria considered for prioritization vary widely across existing kidney exchange platforms~\cite{AshlagiKidneyExchangeOperations2021,Biro_EuropeanModellingKE_2019}, it is not trivial to determine a representative prioritization function. The only criterion that is commonly accepted as the most important criterion is to maximize the overall number of transplants~\cite{AshlagiKidneyExchangeOperations2021,Biro_EuropeanModellingKE_2019}. Therefore, we use this as the only criterion for the evaluation of the approximation quality. Note that the obtained quality may vary slightly for other prioritization functions.

\paragraph{Results}

\begin{figure}
	\includegraphics*[scale=1.0]{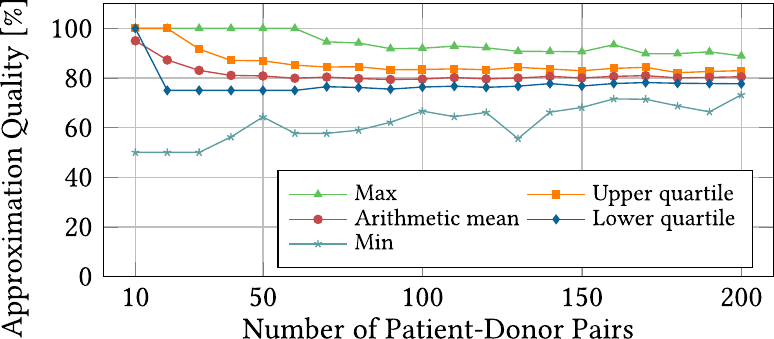}
	\caption{Quality of the solution obtained by our protocol~$ \kepApProtocol $ compared to an optimal solution for the KEP.}
	\label{plt:offline_eval}
\end{figure}

\begin{figure}
	\includegraphics*[scale=1.0]{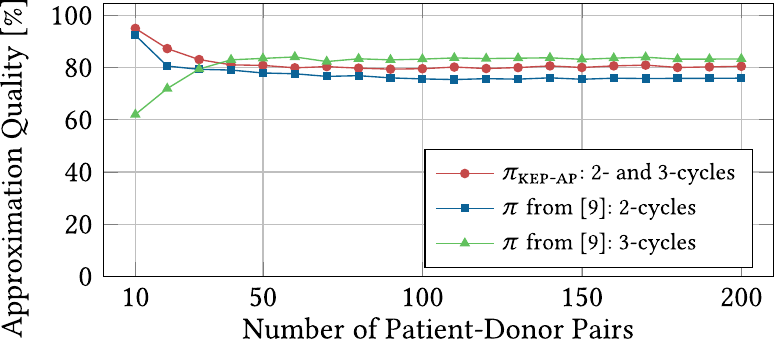}
	\caption{Arithmetic mean of the approximation quality for our protocol $ \kepApProtocol $ compared to the protocol from~\cite{Birka_PPKidneyExchange_2022}.}
	\label{plt:offline_eval_comparison}
\end{figure}

Figure~\ref{plt:offline_eval} shows the results of the evaluation of the approximation quality for our protocol~$ \kepApProtocol $. We observe that the approximation quality is always above the theoretical approximation ratio of $ 1/3 $. In fact, in our tests the quality is never below $ 50\% $ and on average we always match around $ 80\% $ of the possible pairs if the protocol is run for more than $ 50 $ pairs. If the number of pairs is lower, the quality of our approximation is even better. E.g., for $ 10 $ pairs, the quality of our approximation is $ 95\% $ on average.

Figure~\ref{plt:offline_eval_comparison} shows the average approximation quality of our protocol~$ \kepApProtocol $ compared to the quality obtained by~\cite{Birka_PPKidneyExchange_2022}, once for $ 2 $-cycles only and once for $ 3 $-cycles only. 
We observe that our protocol achieves a better quality for all numbers of patient-donor pairs compared to~\cite{Birka_PPKidneyExchange_2022} with $ 2 $-cycles only. This goes in line with the common result that $ 3 $-cycles increase the number of pairs that can be matched~\cite{RothEfficientKidneyExchange2007}.
Compared to~\cite{Birka_PPKidneyExchange_2022} for $ 3 $-cycles only, our protocol achieves a better quality for less than $ 40 $~pairs. For 40 pairs and more, the approach from~\cite{Birka_PPKidneyExchange_2022}, which evaluates one solution for each cycle that exists in the compatibility graph and then chooses the best among these (cf.~Section~\ref{sub:rw_ppKE}), seems to outweigh the benefit of considering both $ 2 $- and $ 3 $-cycles.
However, the gap between the quality obtained by \cite{Birka_PPKidneyExchange_2022} and our protocol stays at about $ 3.9 $\%. 
This slightly superior approximation quality of~\cite{Birka_PPKidneyExchange_2022} comes at the cost of data obliviousness as in \cite{Birka_PPKidneyExchange_2022} the flow of execution depends on the number of cycles in the private compatibility graph. 
Also recall that \cite{Birka_PPKidneyExchange_2022} does not scale for the large numbers of pairs that are to be expected in practice (e.g., about $ 200 $ pairs for UNOS). Thus, the slightly superior approximation quality of~\cite{Birka_PPKidneyExchange_2022} for larger numbers of pairs does not yield an advantage in practice over our protocol.

\section{Simulation}\label{sec:simulation}

In Section~\ref{sec:evaluation}, we evaluated our protocol~$ \kepApProtocol $ in a static setting where a fixed set of patient-donor pairs send their input to the computing peers once. However, in practice kidney exchange is a dynamic problem since the KEP has to be solved repeatedly over time on varying sets of patient-donor pairs. In this section, we evaluate our protocol $ \kepApProtocol $ when used in the context of a dynamic kidney exchange platform where patient-donor pairs register and de-register over time. 
This allows us to determine the impact of using our protocol~$ \kepApProtocol $ in practice. In particular, we compare the number of transplants over time obtained when using our approximation protocol~$ \kepApProtocol $ for each match run compared to using a conventional IP solver that computes an exact solution for each run.
To this end, we use the same simulation framework used by~\cite{Breuer_Matching_2022} for evaluating their protocol for crossover exchange in a dynamic setting. The framework is based on DESMO-J~\cite{DESMOJ}.

\begin{figure}[t]
	\includegraphics*[scale=1.0]{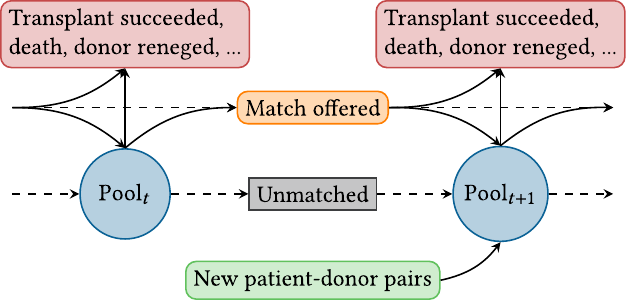}
	\caption{Dynamics in a kidney exchange platform~\cite{Breuer_Matching_2022,Dickerson_FailureAwareKE_2019}.}
	\label{fig:dynamic_model}
\end{figure}

Figure~\ref{fig:dynamic_model} shows the generic model of a dynamic kidney exchange platform. At the core of such a platform is the pool of registered patient-donor pairs. The average number of days it takes for a new pair to register is called the \emph{arrival rate}. The KEP is then solved for all pairs in the pool at fixed time intervals. This is also referred to as a \emph{match run} and the time span between two match runs is the \emph{match run interval}. The specific models for which we run simulations then differ in how they solve the KEP at each match run. 
After a match run, those pairs for which a match has been found have to undergo further medical examination called a \emph{crossmatch test}. The probability that this test fails is called the \emph{crossmatch failure probability}. This probability is about $ 35\% $ for highly sensitized patients and $ 10\% $ for others~\cite{AshlagiKidneyExchangeOperations2021}. If the pairs pass the crossmatch test, they are offered the matching organ and can either accept it or refuse it. The probability that a patient refuses the matching organ is called the \emph{match refusal rate}. If the crossmatch fails or a pair refuses the matching organ, all pairs involved in the exchange reenter the pool. On average it takes a pair about $ 7 $ days to reenter in case of a failed crossmatch and about $ 2 $ days in case of a match refusal~\cite{AshlagiFrequencies2018}. Besides leaving the platform due to a found match, pairs may also leave unmatched due to death, a reneged donor, or other reasons. The average number of days a pair stays in the pool before leaving due to any other reason than being matched is called the \emph{departure~rate}.

Table~\ref{tab:sim_parameters} lists the parameters that exist in a kidney exchange platform and the values of these parameters for which we run simulations. Note that the existing platforms vary widely w.r.t.\ these values~\cite{AshlagiFrequencies2018,AshlagiKidneyExchangeOperations2021,Biro_EuropeanModellingKE_2019}. To evaluate the performance of a platform using our protocol~$ \kepApProtocol $ for the different settings encountered in practice, we follow \cite{Breuer_Matching_2022} and run simulations for all values listed~in~Table~\ref{tab:sim_parameters}. 

\subsection{Setup}\label{sub:sim_setup}

We use the same data set from UNOS as in the run time evaluation. For each simulation run, we sample the involved patient-donor pairs uniformly at random from the set of 2,913 unique pairs contained in the UNOS data set and we consider a time horizon of $ 5 $ years. We then evaluate the number of patients that receive a kidney transplant over time. We execute $ 50 $ repetitions for each parameter constellation and average the results. 

\begin{table}[t]
	\caption{Parameters of a kidney exchange platform and the values considered in our simulations (adopted from~\cite{Breuer_Matching_2022}).}
	\label{tab:sim_parameters}
	\begin{adjustbox}{width=\columnwidth}
\ra{0.9}
\begin{tabular}{l | l}
	Parameter 						& Values 									\\\hline
	arrival rate					& $ 1, 2, 4, 7, 14 $ days 					\Tstrut\\
	match run interval				& $ 1, 2, 4, 7, 14, 30, 60, 120 $ days	\\
	departure rate					& $ 400, 800, 1200 $ days					\\
	match refusal rate				& $ 0, 10, 20, 30, 40 $ \%					\\
	crossmatch failure prob.\		& $ 35 $ \%, for highly sensitized patients	\\
									& $ 10 $ \%, for others						\\
	reentering delay				& $ 7 $ days, for a failed crossmatch		\\
									& $ 2 $ days, for a match refusal			\\
\end{tabular}
\end{adjustbox}
\end{table}

For our simulations, we consider three different models. They all follow the generic model depicted in Figure~\ref{fig:dynamic_model} but use different approaches for solving the KEP at each match run. In the \emph{conventional model} the KEP is solved by a central platform operator using a conventional IP solver. The privacy-preserving \emph{$ \kepApProtocol $ model} uses our novel protocol~$ \kepApProtocol $ to approximate the KEP in each match run. The privacy-preserving \emph{$ \kepIpProtocol $ model} uses the protocol~$ \kepIpProtocol $ from~\cite{Breuer_KepIp_2022}, which is the best known SMPC protocol for optimally solving the KEP. As for our qualiy evaluation (cf.~Section~\ref{sub:eval_quality}), we consider the prioritization function that maximizes the number of found transplants, which is universally considered as the most important criterion in kidney exchange~\cite{AshlagiKidneyExchangeOperations2021,Biro_EuropeanModellingKE_2019}. All other prioritization criteria vary widely across the different platforms and including any of them would bias the results towards those specific criteria.

In the conventional model, we use a standard LP solver for computing the optimal solution for the KEP whereas in the $ \kepApProtocol $ model we use the conventional algorithm (Algorithm~\ref{alg:greedy}) from the quality evaluation (Section~\ref{sub:eval_quality}) to determine the number of matched patients. Then, we use the run times from Section~\ref{sub:eval_runtime} and advance the simulation time by this amount. Thereby, we can run simulations for a large time horizon and many different parameter constellations in a short time. The simulation results presented in this section can be achieved using any underlying SMPC primitive for our protocol $ \kepApProtocol $ as long as the run time is below 24 hours for 200 patient-donor pairs. This holds for all settings from Section~\ref{sub:eval_runtime}. If a protocol is used that yields run times longer than 24 hours for 200 pairs, this can still be simulated but would yield different~results.

For the $ \kepIpProtocol $ model, we use the same LP solver as in the conventional model and set the simulation time according to the run~times from~\cite{Breuer_KepIp_2022}. Note that the protocol~$ \kepIpProtocol $ only exhibits feasible run times for small numbers of pairs. If the run time is not feasible for the number of pairs in the pool of the kidney exchange platform, we split the pool into multiple sub-pools and solve the KEP for each of these independently. In particular, we split the pool into the minimum number of sub-pools for which the run time of the protocol $ \kepIpProtocol $ is still feasible. This yields an approximation of the overall optimal solution as possibly matching pairs may be assigned to different sub-pools. 
Thus, in the $ \kepApProtocol $ model we compute an approximation on the whole set of pairs in the pool whereas in the $ \kepIpProtocol $ model we obtain an approximation due to pool splitting. 

\subsection{Results}\label{sub:sim_results}

\begin{figure}[t]
	\includegraphics*[scale=0.96]{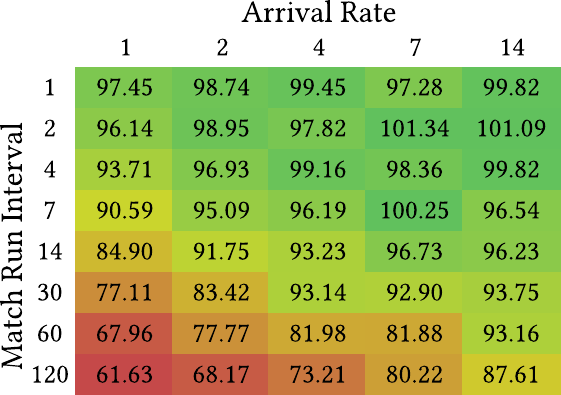}
	\caption{$ \kepApProtocol $ model vs.\ conventional model - the percentages state the number of transplants achieved in the $ \kepApProtocol $ model divided by those in the conventional model for a departure rate of 400 days and a match refusal rate~of~20\%.}
	\label{tab:heatmap_cv_ap}
\end{figure}

\paragraph{$ \kepApProtocol $ Model vs.\ Conventional Model}
We compare the simulation results for the $ \kepApProtocol $ model to those of the conventional model by computing the percentage of transplants achieved in the $ \kepApProtocol $ model compared to those achieved in the conventional model. Our results show that the performance difference between the two models is heavily influenced by the arrival rate and the match run interval but only very slightly by the departure rate and the match refusal rate. In particular, the overall percentage of transplants in the $ \kepApProtocol $ model compared to the conventional model is within $ 1\% $ for all departure rates and within $ 4\% $ for all match refusal rates listed in Table~\ref{tab:sim_parameters}, slightly increasing for larger departure rates and slightly decreasing for larger match refusal rates. As the influence of departure and match refusal rate on the performance difference is small, we only present the results for fixed values of these parameters. Thus, we focus on those parameters that significantly influence the performance difference between the different~models.

Figure~\ref{tab:heatmap_cv_ap} shows the percentages for a fixed departure rate of $ 400 $ days, a fixed match refusal rate of $ 20\% $, and different values for the arrival rate and the match run interval. Larger percentages indicate a better performance of the $ \kepApProtocol $ model compared to the conventional model. We make three main observations.

First, the $ \kepApProtocol $ model performs considerably better than suggested by the quality evaluation in the static setting, where the approximation quality is about $ 80\% $ (cf.~Section~\ref{sub:eval_quality}). This is due to the fact that a pair that is not matched in one match run can still be matched in a subsequent run. More specifically, the optimal solution computed for a single run in the conventional model is not necessarily the optimal solution across multiple runs. We show this on a small, simplified example. Assume that in a match run with the four pairs $ \partysymbol{A} $, $ \partysymbol{B} $, $ \partysymbol{C} $, and $ \partysymbol{D} $, the conventional model finds the cycles $ (\partysymbol{A}, \partysymbol{B}) $ and $ (\partysymbol{C}, \partysymbol{D}) $ whereas the $ \kepApProtocol $ model only finds the cycle $ (\partysymbol{B}, \partysymbol{C}) $. Considering this single run, the $ \kepApProtocol $ model only finds $ 50\% $ of the transplants found in the conventional model. Assume now that in the next run, a pair $ \partysymbol{E} $ has joined the pool such that it forms the 3-cycle $ (\partysymbol{A}, \partysymbol{D}, \partysymbol{E}) $  with $ \partysymbol{A} $ and $ \partysymbol{D} $. Since in the conventional model these pairs already left the pool in the previous run, this cycle cannot be found. However, in the $ \kepApProtocol $ model, the cycle is found. Thus, in this simplified example, the $ \kepApProtocol $ model even matches more pairs over two runs than the conventional model. These dynamics lead to the $ \kepApProtocol $ model performing much better than indicated by the quality evaluation in the static setting. For some parameter constellations, this even leads to a better performance of the $ \kepApProtocol $ model compared to the conventional model (e.g., all percentages larger than $ 100\% $ in~Figure~\ref{tab:heatmap_cv_ap}).

Second, the $ \kepApProtocol $ model performs better for smaller match run intervals. If match runs are executed frequently, the $ \kepApProtocol $~model has a higher probability of matching pairs in later match runs that could not be matched in a previous run as the probability that a pair leaves in between two runs due to any other reason than being matched is smaller for small match run intervals. Also, the average pool size is smaller for small match run intervals as less pairs arrive between two runs. The $ \kepApProtocol $ model benefits from such smaller pool sizes as the approximation computed by the protocol $ \kepApProtocol $ is better for smaller numbers of patient-donor pairs (cf.~Section~\ref{sub:eval_quality}).

Third, the $ \kepApProtocol $ model performs better for larger arrival rates. Since for larger arrival rates overall less pairs register with the platform, the average pool size over time is also less. As stated above, the $ \kepApProtocol $ model benefits from this as it achieves a better approximation quality for smaller numbers of pairs (cf.~Section~\ref{sub:eval_quality}).
 
Overall, our results show that the $ \kepApProtocol $ model achieves nearly the same number of transplants over time if the match run interval is smaller than $ 7 $ days or if it is smaller than $ 14 $ days and at the same time the arrival rate is larger than $ 2 $ days. Thus, it only performs considerably worse than the conventional model if pairs register frequently and match runs are only executed very infrequently. This, however, also corresponds to the parameter constellations that yield the lowest numbers of transplants. As the match run interval can be freely chosen by a kidney exchange platform, in practice it would not make sense to choose a large match run interval if pairs register very frequently. Recent research comes to the same conclusion stating that in general small match run intervals yield a larger number of transplants~\cite{AshlagiFrequencies2018}. Thus, we conclude that the performance difference between our $ \kepApProtocol $ model and the conventional model is very small for those parameter constellations that occur in practice.

\begin{figure}[t]
	\includegraphics*[scale=0.96]{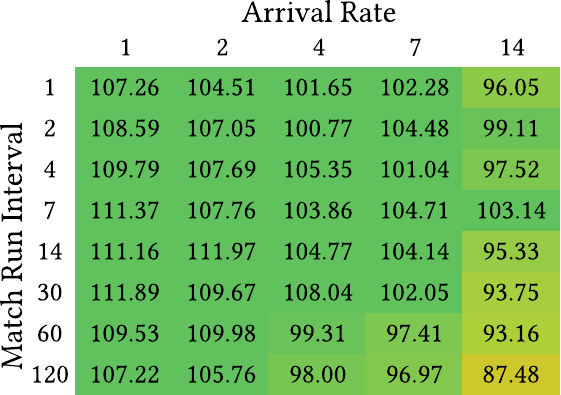}
	\caption{$ \kepApProtocol $ model vs.\ $ \kepIpProtocol $ model - the percentages state the number of transplants in the $ \kepApProtocol $ model divided by those in the $ \kepIpProtocol $ model with pool splitting both for a departure rate of 400 days and a match refusal rate of 20\%.}
	\label{tab:heatmap_ip_ap}
\end{figure}

While one may argue that even a high value such as $ 97\% $ of transplants compared to the conventional model is unacceptable in practice, this only holds for those countries where kidney exchange is already possible. In countries such as Germany or Brazil, however, where kidney exchange is legally not allowed due to fear of manipulation, corruption, and coercion, our privacy-preserving approach has the potential to increase the number of transplants from $ 0 $ to $ 97\% $ of the maximum possible number of transplants.

\paragraph{$ \kepApProtocol $ Model vs.\ $ \kepIpProtocol $ Model}
Figure~\ref{tab:heatmap_ip_ap} shows the results for the comparison of our $ \kepApProtocol $ model and the $ \kepIpProtocol $ model which uses the protocol~$ \kepIpProtocol $ from~\cite{Breuer_KepIp_2022}. 
Note that values larger than $ 100 $\% indicate that our $ \kepApProtocol $ model outperforms the $ \kepIpProtocol $ model. Recall that the protocol~$ \kepIpProtocol $ computes an exact solution for the KEP but only scales for small numbers of patient-donor pairs. If the number of pairs becomes too large for the protocol~$ \kepIpProtocol $, we split the pool into multiple sub-pools and then solve the KEP separately for each sub-pool. Thus, the $ \kepIpProtocol $ model also only computes an approximation of the KEP if the pool size becomes too large. 

The percentages shown in Figure~\ref{tab:heatmap_ip_ap} show that our $ \kepApProtocol $ model outperforms the $ \kepIpProtocol $ model for nearly all parameter constellations. 
In general, the performance of the $ \kepIpProtocol $ model is closer to the $ \kepApProtocol $ model for larger arrival rates as these induce a smaller average pool size and thus the pool has to be split into less sub-pools.
Therefore, the approximation obtained due to pool splitting is closer to the approximation obtained by our protocol $ \kepApProtocol $.

\section{Conclusion and Future Work}

We have presented an SMPC protocol that approximates the KEP using a Greedy approach and evaluated it both for the semi-honest and the malicious model. In contrast to the state-of-the-art protocols for the KEP, our protocol is data oblivious and scales for the large numbers of patient-donor pairs encountered in existing kidney exchange platforms. We have shown that our protocol outperforms the existing exact privacy-preserving solutions for the KEP in the dynamic setting, where patient-donor pairs register and de-register over time and that the difference in the number of transplants found compared to the conventional model is small in practice.

A possible direction for future work is the development of an~approximation protocol that allows for altruistic donors who are not bound to any particular patient. These are not allowed in all countries but have been shown to increase the number of transplants.

\section*{Source Code}
The MP-SPDZ source code of our protocol~$ \kepApProtocol $ (cf.~Protocol~\ref{prot:kep_approx}) is available under:
\url{https://gitlab.com/rwth-itsec/kep-ap}

%%
%% The acknowledgments section is defined using the "acks" environment
%% (and NOT an unnumbered section). This ensures the proper
%% identification of the section in the article metadata, and the
%% consistent spelling of the heading.
\begin{acks}
This work was funded by the Deutsche Forschungsgemeinschaft (DFG, German Resarch Foundation) - project number (419340256) and NSF grant CCF-1646999. Any opinion, findings, and conclusions or recommendations expressed in this material are those of the author(s) and do not necessarily reflect the views of the National Science Foundation. Simulations were performed with computing resources granted by RWTH Aachen University under project rwth1214.
\end{acks}

%%
%% The next two lines define the bibliography style to be used, and
%% the bibliography file.
\bibliographystyle{ACM-Reference-Format}
\bibliography{references_doi}

%%
%% If your work has an appendix, this is the place to put it.
\appendix

\section{Proof of the Approximation Ratio}\label{app:proof}

\begin{theorem}\label{thm:approx_ratio}
	Algorithm~\ref{alg:greedy} computes a 1/3-approximation of the KEP with a maximum cycle size of $ 3 $ (cf.~Definition~\ref{def:kep}).
\end{theorem}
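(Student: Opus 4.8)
The plan is to run the standard charging argument for greedy algorithms on weighted packing problems. First I would fix an optimal KEP solution $O$, i.e.\ a set of vertex-disjoint exchange cycles of size at most $\cycleBound = 3$ in $\cg$ of maximum total weight, and let $\singleCycle_1, \dots, \singleCycle_k$ denote the cycles returned by Algorithm~\ref{alg:greedy}, indexed in the order in which their subsets $\singleSubset_1, \dots, \singleSubset_k$ are inserted into $\chosenSets$ by the while loop. Since line~1 only relabels the nodes of $\cg$ and the final step only translates each chosen subset back, via $\subsetCycleMap$, to the maximum-weight cycle on its nodes, neither step changes any weight; and because the loop deletes every subset intersecting an already-chosen subset, the $\singleCycle_i$ are pairwise vertex-disjoint and hence form a feasible KEP solution. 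It therefore suffices to show $\weightFunction(\{\singleCycle_1,\dots,\singleCycle_k\}) \ge \tfrac13\,\weightFunction(O)$.

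The key claim I would establish is that every cycle of $O$ shares a node with at least one $\singleCycle_i$. This uses two facts: every edge of $\cg$ has strictly positive weight, so any cycle actually present in $\cg$ — in particular any cycle of $O$ — has weight $>0$, and hence the corresponding subset entry of $\cycleWeight$ is $>0$; and the while loop only terminates once no subset of positive weight remains in $\subsets$. If some $\singleCycle^* \in O$ were disjoint from all $\singleCycle_i$, its subset would never be removed from $\subsets$ and would still have weight $>0$ at termination, a contradiction. I would then define the \emph{charge target} of each $\singleCycle^* \in O$ to be the chosen cycle $\singleCycle_i$ with smallest index $i$ that shares a node with $\singleCycle^*$.

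Next I would prove the two quantitative bounds that close the argument. (i) Weight domination: by minimality of $i$, the node set of $\singleCycle^*$ is disjoint from $\singleSubset_1,\dots,\singleSubset_{i-1}$, so it is still present in $\subsets$ (with its original positive weight, since Algorithm~\ref{alg:greedy} never decreases a surviving subset's weight) when iteration $i$ selects $\singleSubset_i$; as the loop picks a subset of maximum current weight and a subset's weight is the maximum weight over the at most two cycles on its nodes, $\weightFunction(\singleCycle_i) = \cycleWeight(\singleSubset_i) \ge \weightFunction(\singleCycle^*)$. (ii) Counting: every cycle of $O$ charged to $\singleCycle_i$ meets $\singleCycle_i$, these cycles are pairwise vertex-disjoint, and $\singleCycle_i$ has at most $\cycleBound = 3$ nodes, so at most $3$ cycles of $O$ are charged to $\singleCycle_i$. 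Summing over the charge targets then gives $\weightFunction(O) = \sum_{\singleCycle^* \in O} \weightFunction(\singleCycle^*) \le \sum_{i=1}^{k} \sum_{\singleCycle^* \text{ charged to } \singleCycle_i} \weightFunction(\singleCycle^*) \le \sum_{i=1}^{k} 3\,\weightFunction(\singleCycle_i) = 3\,\weightFunction(\{\singleCycle_1,\dots,\singleCycle_k\})$, which is the desired inequality.

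I expect the main obstacle to be the bookkeeping around the subset-versus-cycle encoding of Algorithm~\ref{alg:greedy}: one must argue carefully that at iteration $i$ the subset of $\singleCycle^*$ is genuinely still in $\subsets$ and still carries its original positive weight (this relies on the loop deleting only subsets that intersect an already-chosen subset and never rewriting the weights of the survivors), and that $\weightFunction(\singleCycle_i)$ really equals the stored subset weight $\cycleWeight(\singleSubset_i)$ via the definition of $\subsetCycleMap$. The "every optimal cycle is blocked" step and the disjointness counting are then routine. I would finish with a small example showing the bound is tight — one heavy $3$-cycle whose selection blocks three disjoint, slightly lighter cycles — so that the ratio $1/3$ cannot be improved for this algorithm.
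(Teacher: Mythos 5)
Your proof is correct and follows essentially the same charging argument as the paper: each optimal cycle is charged to a greedy cycle it intersects, each greedy cycle absorbs at most three charges (it has at most three nodes and the optimal cycles are vertex-disjoint), and the greedy cycle dominates the weight of each cycle charged to it. Your version is in fact slightly more careful than the paper's in two places: you explicitly verify that every optimal cycle intersects some greedy cycle (via the strictly positive edge weights and the loop's termination condition), and you charge each optimal cycle only to the \emph{first} intersecting greedy cycle, which is what actually justifies the weight-domination step --- the paper's set $O_c$ of \emph{all} optimal cycles meeting a greedy cycle $c$ can contain a cycle heavier than $c$ if that cycle was already blocked by an earlier greedy choice, so the stated inequality $\sum_{o \in O_c}\omega(o) \le 3\,\omega(c)$ really needs your first-intersection convention. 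The tightness example you sketch at the end is a sensible addition that the paper omits.
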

\begin{proof}
	Let $ \proofOptCycles $ be a set of cycles for an optimal solution for the KEP and let $ \proofCompCycles $ be a set of cycles computed by Algorithm~\ref{alg:greedy}. 
	
	For all $ \proofCompCyclesSingle \in \proofCompCycles $, let $ \proofOptCycles_{\proofCompCyclesSingle} $ be the subset of $ \proofOptCycles $ containing all cycles in the optimal solution that share at least one node with $ \proofCompCyclesSingle $, i.e., $ \proofOptCycles_{\proofCompCyclesSingle} = \{\proofOptCyclesSingle \in  \proofOptCycles \ \vert \ \proofOptCyclesSingle \cap \proofCompCyclesSingle \neq \emptyset\} $. 
	
	Then, $ \forall \proofCompCyclesSingle \in \proofCompCycles $ it holds that $ \vert \proofOptCycles_{\proofCompCyclesSingle} \vert \leq 3 $ since in the worst case $ \proofCompCyclesSingle = (v_1, v_2, v_3) $ such that all nodes of $ \proofCompCyclesSingle $ are part of a different cycle in the set of cycles $ \proofOptCycles $ for the optimal solution.
	
	Let $ \proofOptCyclesSingle_{v_1}, \proofOptCyclesSingle_{v_2}, \proofOptCyclesSingle_{v_3} $ be the three cycles that were chosen instead of $ \proofCompCyclesSingle$ in the optimal solution. 
	We know that for the weights of these cycles it holds that $ \weightFunction(\proofOptCyclesSingle_{v_i}) \leq \weightFunction(\proofCompCyclesSingle) $ with $ i \in \{1, 2, 3\} $ since Algorithm~\ref{alg:greedy} would not have chosen $ \proofCompCyclesSingle $ if there was an alternative cycle of larger weight.
	Thus, it holds that $ \forall \proofCompCyclesSingle \in \proofCompCycles: \ \sum_{\proofOptCyclesSingle \in \proofOptCycles_{\proofCompCyclesSingle}} \weightFunction(\proofOptCyclesSingle) \leq 3 \ \weightFunction(\proofCompCyclesSingle) $.
	
	Overall, we obtain: 
	\begin{equation*}
		\sum_{\proofOptCyclesSingle \in \proofOptCycles} \weightFunction(\proofOptCyclesSingle) \leq \sum_{\proofCompCyclesSingle \in \proofCompCycles}\sum_{\proofOptCyclesSingle \in \proofOptCycles_{\proofCompCyclesSingle}} \weightFunction(\proofOptCyclesSingle) \leq 3 \sum_{\proofCompCyclesSingle \in \proofCompCycles} \weightFunction(\proofCompCyclesSingle)
	\end{equation*}
\end{proof}

\section{Demux Gate}\label{app:gates}

In this section, we provide the specification of the gate~$ \demuxGate{n} $ that we use in our protocol to transform a value $ x $ into a binary indicator vector $ \enc{d} $ of size $ n $, where $ n $ is the maximum possible value for $ x $. For the output $ \enc{d} $, it holds that $ \enc{d}(i) = \enc{1} $ if $ i = x $, and $ \enc{d}(i) = \enc{0} $, otherwise. The gate is based on the approach from~\cite{Launchbury_Multiplex_2012} and the implementation of $ \demuxVecGate $ provided in Gate~\ref{gat:demux} is taken from~\cite{Keller_EfficientDataStructures_2014}.

\begin{gate}[t]
	\algrenewcommand\algorithmicindent{0.7em}
\small
\begin{flushleft}
	\textbf{Input:} Secret value $ \enc{x} $ and $ n $, which is the maximum possible value of $ x $ \vspace*{0.2em} \\
	\textbf{Output:} Binary indicator vector $ \enc{d} $ of length $ n $
\end{flushleft}
\begin{algorithmic}[1]
	\State Generate the bit decomposition $ \enc{x}(0), ..., \enc{x}(\bitsize - 1) $ of $ \enc{x} $
	\State $ \enc{d} \leftarrow \demuxVecGate(\enc{x}(0), ..., \enc{x}(\bitsize - 1)) $
	\State \Return $ (\enc{d}(0), ..., \enc{d}(n - 1)) $
	\State 
	\Procedure{demux-vec}{x(0), ..., x(l)}
		\If{$ l = 1 $}
			\State \Return $ (1 - \enc{x}(0), \enc{x}(0)) $
		\EndIf
		\State $ \enc{y}(0), ..., \enc{y}(2^{\lfloor l / 2 \rfloor} - 1) \leftarrow \demuxVecGate{\enc{x}(0),...,\enc{x}(\lfloor l / 2 \rfloor - 1)} $
		\State $ \enc{z}(0), ..., \enc{z}(2^{\lfloor l / 2 \rfloor} - 1) \leftarrow \demuxVecGate{\enc{x}(\lfloor l / 2 \rfloor),...,\enc{x}(b - 1)} $
		\For{$ 0 \leq i < 2^l - 1 $}
			\State $ \enc{d}[i] \leftarrow \enc{y}(i \mod 2^{\lfloor l / 2 \rfloor}) \cdot \enc{z}(i \mod 2^{\lfloor l / 2 \rfloor}) $
		\EndFor
		\State \Return $ (\enc{d}(0), ..., \enc{d}(2^l - 1)) $
	\EndProcedure
\end{algorithmic}
\vspace*{-0.1em}

	\caption{$ \demuxGate{n} $, based on~\cite{Keller_EfficientDataStructures_2014}}
	\label{gat:demux}
\end{gate}

As a first step, $ \enc{x} $ is transformed into its binary representation. Then, the protocol from~\cite{Keller_EfficientDataStructures_2014} is used to create the binary indicator vector $ \enc{d} $ from the binary representation of $ \enc{x} $. The gate~$ \demuxGate{n} $ then returns the first $ n $ entries of $ \enc{d} $.

In~\cite{Keller_EfficientDataStructures_2014}, the authors show that the conversion of the binary representation of $ x $ into the binary indicator vector $ d $ requires $ \bigo{2^\bitsize} $ invocations in $ \bigo{\log \bitsize} $ rounds, where $ \bitsize $ is the bitsize of a secret value. These are also the overall complexities of the gate~$ \demuxGate{n} $.

\section{SMPC Primitives}\label{app:smpc_protocols}

In this section, we provide further details on the different SMPC primitives that we use in the run time evaluation of our protocol~$ \kepApProtocol $ (Section~\ref{sub:eval_runtime}). 

The SMPC primitive~\texttt{rep-ring} uses the three-party replicated secret sharing scheme by Araki et al.~\cite{Araki_ReplicatedSecretSharing_2016}. To the best of our knowledge, this is the most efficient scheme to date for three parties in the semi-honest model with an honest majority. It can be instantiated both over the field~$ \field_p $ as well as the ring $ \ints_{2^k} $. We use the ring setting since it yields more efficient run times for our protocol~$ \kepApProtocol $.

Lindell and Nof~\cite{Lindell_ArakiFields_2017} have extended the scheme from~\cite{Araki_ReplicatedSecretSharing_2016} to the malicious model for the field setting. 
The basic idea is to generate potentially incorrect multiplication triples~\cite{Beaver_Triples_1992} during a pre-processing phase and then use these in a post-sacrifice step to check that all multiplications were executed correctly during the protocol execution. 
The SMPC primitive~\texttt{ps-rep-ring} is a further extension of the approach from~\cite{Lindell_ArakiFields_2017} by Eerikson et al.~\cite{Eerikson_UseYourBrain_2020} to the ring setting.

The third SMPC primitive that we use is \texttt{semi2k} which is a semi-honest version of the SMPC primitive \texttt{spdz2k} by Cramer et al.~\cite{Cramer_Spdz2k_2018}, which uses oblivious transfer to generate shares of authenticated multiplication triples. In contrast to the SMPC primitives~\texttt{rep-ring} and \texttt{ps-rep-ring}, \texttt{semi2k} provides for security in the presence of a dishonest majority.

\section{Evaluation for Crossover Exchange}\label{app:twocycles}

While most existing kidney exchange platforms consider a maximum cycle size of $ 3 $, i.e., both cycles of size $ 2 $ and $ 3 $, there are also some countries (e.g., all countries of Scandiatransplant~\footnote{Scandiatransplant comprises the countries Denmark, Finland, Iceland, Norway, Sweden and Estonia~\cite{Scandiatransplant_2023}.}) that only allow crossover exchanges, i.e., exchange cycles of size $ 2 $.
In this section, we present the results for our protocol~$ \kepApProtocol $ when considering crossover exchanges only. For this special case, some parts of the protocol~$ \kepApProtocol $ (cf.~Protocol~\ref{prot:kep_approx}) can be simplified as the set $ \subsets $ of subsets now only contains all subsets of size $ 2 $ and thus it is no longer necessary to represent a subset by three nodes. In particular, this means that the distinction between subsets of size $ 2 $ and $ 3 $ in Steps~$ 13-17 $ and in Steps~$ 35-41 $ of Protocol~\ref{prot:kep_approx} can be skipped and that the matrix $ \enc{\subsetCycleMap} $ is no longer required since there is only a single possible exchange cycle for each subset of size $ 2 $.

\begin{figure}[t]
	\includegraphics*[scale=1.0]{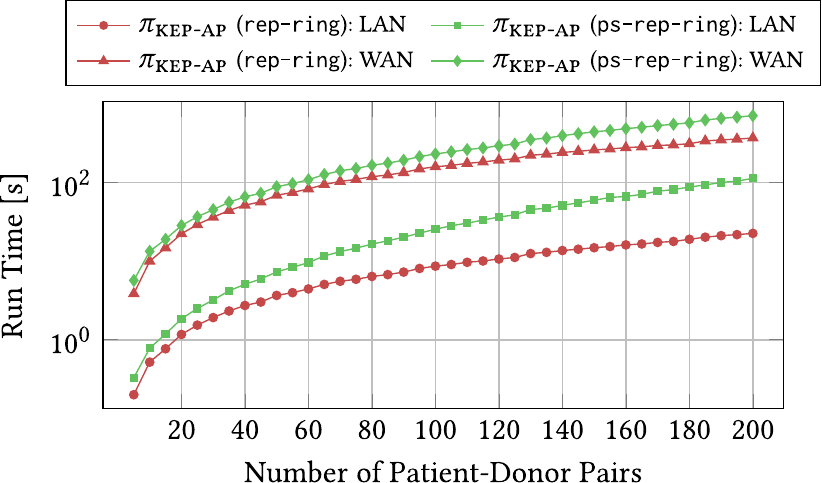}
	\caption{Run time for our novel protocol~$ \kepApProtocol $ for the special case of crossover exchanges only. In the LAN setting, the latency is 1ms and the bandwidth is 1Gbps. In the WAN setting, the latency is 20ms and the bandwidth is 1Gbps.}
	\label{plt:runtimes_two_cycles}
\end{figure}

\subsection{Evaluation of the Run Time Performance}
We evaluate the modified protocol with exactly the same setup as in Section~\ref{sub:eval_runtime}.

\paragraph{Results.}
Figure~\ref{plt:runtimes_two_cycles} shows the run time of the modified protocol~$ \kepApProtocol $, i.e., for the special case of crossover exchange. The raw data can be found in Appendix~\ref{app:results}, Table~\ref{tab:exact_results}. As expected, the run time for this special case still increases sub-exponentially with the number of patient-donor pairs whereas the overall run time is significantly lower than for the case where cycles of size $ 2 $ and $ 3 $ are considered. For example, for the SMPC primitive~\texttt{rep-ring} in the LAN setting, the run time for a maximum cycle size of $ 3 $ is on average about a factor of $ 5.2 $ larger than the run time for crossover exchanges only, increasing from a factor of $ 1.2 $ for $ 5 $ patient-donor pairs to a factor of $ 14.3 $ for $ 200 $ pairs. This is due to the fact that the size of the set $ \subsets $ is $ \sum_{i = 2}^{3} \frac{\numparties !}{(\numparties - i) ! \cdot i !} $ for a maximum cycles size of $ 3 $ but only $ \frac{\numparties}{(\numparties - 2)! \cdot 2} $ when only crossover exchanges are allowed. Thus, the size of the set $ \subsets $ increases significantly faster for a maximum cycle size of~$ 3 $.
 
\paragraph{Comparison to Related Work.}
Figure~\ref{plt:comparison_two_cycles} includes the run times for our modified protocol~$ \kepApProtocol $ for crossover exchanges only compared to the protocol from~\cite{Birka_PPKidneyExchange_2022} also for the case that only crossover exchanges are allowed. Recall that the protocol from~\cite{Birka_PPKidneyExchange_2022} was implemented in the two-party ABY framework~\cite{Demmler_ABY_2015}. Thus, a direct comparison of the specific values of the run times implies the limitations as discussed in Section~\ref{sub:eval_runtime}. Still, as for the evaluation for cycles of size $ 3 $ (cf.~Section~\ref{sub:eval_runtime}), we observe that our protocol achieves significantly better run times than the protocol from~\cite{Birka_PPKidneyExchange_2022} for increasing numbers of patient-donor pairs. This holds for $ 30 $ or more patient-donor pairs. For example, for $ 40 $~pairs, our protocol outperforms the protocol from~\cite{Birka_PPKidneyExchange_2022} by a factor of $ 2.8 $. The steeper curve for the protocol from~\cite{Birka_PPKidneyExchange_2022} indicates that this gap rapidly increases for more than $ 40 $ pairs. However, \cite{Birka_PPKidneyExchange_2022} only evaluated their protocol for up to $ 40 $ pairs.
As for the case of cycles of size~$ 3 $, note that the steeper curve of \cite{Birka_PPKidneyExchange_2022} compared to our protocol holds irrespective of the underlying framework.
 
\begin{figure}
	\includegraphics*[scale=1.0]{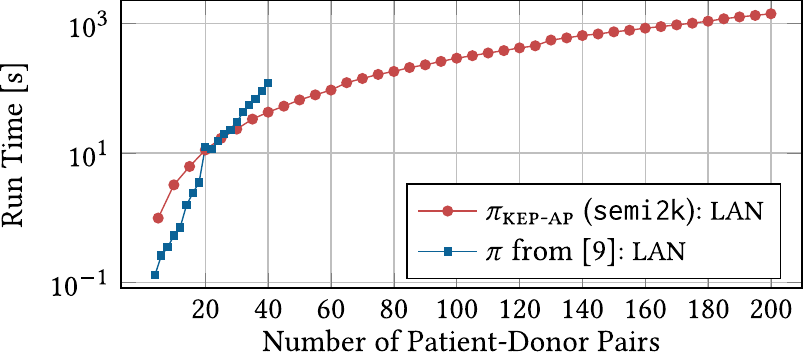}
	\caption{Run time [s] for the special case that only crossover exchanges are considered for our protocol~$ \kepApProtocol $ implemented in MP-SPDZ using the underlying SMPC primitive~\texttt{semi2k} and the results from~\cite{Birka_PPKidneyExchange_2022} (online phase only) which uses the two-party framework ABY~\cite{Demmler_ABY_2015}. Note that the latency in the LAN setting from~\cite{Birka_PPKidneyExchange_2022} is 1.3ms compared to 1ms for our LAN setting.}
	\label{plt:comparison_two_cycles}
\end{figure}

\subsection{Evaluation of the Approximation Quality}\label{app:quality_crossover}

In Section~\ref{sub:eval_quality}, we evaluated the approximation quality obtained by our protocol~$ \kepApProtocol $ with a maximum cycle size of $ 3 $ and of the protocol from~\cite{Birka_PPKidneyExchange_2022} both for the case that only cycles of size $ 2 $ are considered and for the case that only cycles of size $ 3 $ are considered. The base line against which we compared the approximation quality was an optimal solution for the KEP with a maximum cycle size of~$ 3 $, which is the common setting in practice~\cite{AshlagiKidneyExchangeOperations2021,Biro_EuropeanModellingKE_2019}. In this section, we evaluate the approximation quality of our modified protocol~$ \kepApProtocol $ and the protocol from~\cite{Birka_PPKidneyExchange_2022} for the case that only crossover exchanges are allowed. The base line is then an optimal solution for the KEP for crossover exchanges only (obtained using an LP solver for the~KEP).

\begin{figure}[t]
	\includegraphics*[scale=1.0]{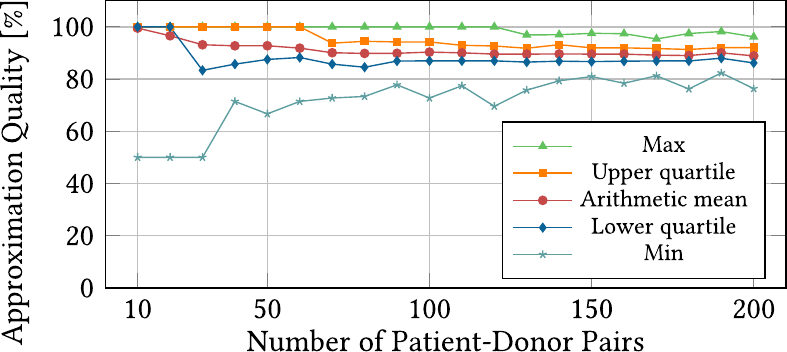}
	\caption{Quality of the solution obtained by our modified protocol~$ \kepApProtocol $ compared to an optimal solution for the KEP for the case that only crossover exchanges are considered.}
	\label{plt:offline_eval_crossover}
\end{figure}

Figure~\ref{plt:offline_eval} shows the results for the quality evaluation of our modified protocol~$ \kepApProtocol $, i.e., for the case that only crossover exchanges are allowed. As in Section~\ref{sub:eval_quality}, we choose the prioritization function that maximizes the overall number of transplants. 

We observe that the approximation quality is never below $ 50\% $ and on average around $ 89\% $ of the patients are matched if more than $ 50 $ patient-donor pairs are considered. For smaller numbers of pairs, the approximation quality is even better. For example, for $ 20 $ pairs, still more than $ 96\% $ of the patients are matched on average. As for the case of a maximum cycle size of $ 3 $, the average approximation quality is thus significantly better than the theoretical approximation ratio, which is $ 1/2 $ if only crossover exchanges are considered. This approximation ratio can be proven analogously to the proof of Theorem~\ref{thm:approx_ratio}. Compared to the results for a maximum cycle size of~$ 3 $, we also observe that the average approximation quality is better for the case of crossover exchange. Thus, our approximation protocol is even better suited for the case of crossover exchanges than for the case where cycles of size $ 2 $ and $ 3 $ are allowed. In particular, on average the approximation quality is about $ 10\% $ larger for crossover exchanges only than for a maximum cycle size of $ 3 $. This is due to the fact that there are a lot more cycles in the compatibility graph if the maximum cycle size is $ 3 $. Thus, there is also a higher probability that the approximation protocol chooses a cycle that is not part of the optimal solution.

\begin{figure}
	\includegraphics*[scale=1.0]{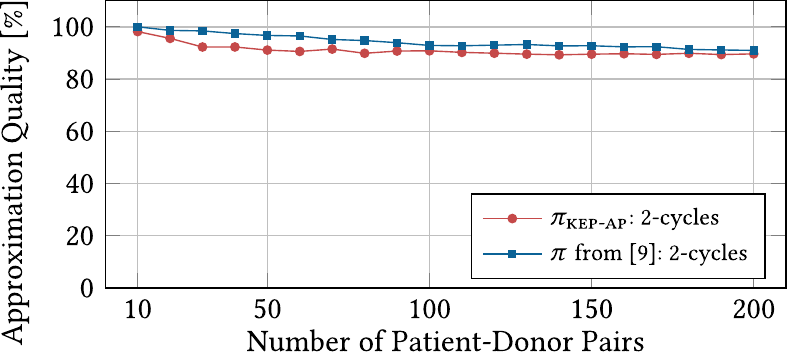}
	\caption{Arithmetic mean of the approximation quality obtained by our modified protocol $ \kepApProtocol $ compared to the protocol from~\cite{Birka_PPKidneyExchange_2022} for the special case of crossover exchanges only.}
	\label{plt:offline_eval_comparison_crossover}
\end{figure}

Figure~\ref{plt:offline_eval_comparison_crossover} shows the average approximation quality of our modified protocol compared to the protocol from~\cite{Birka_PPKidneyExchange_2022} for the case that only crossover exchanges are allowed. We observe that for this setting the average approximation quality of \cite{Birka_PPKidneyExchange_2022} is slightly better than for our protocol. In particular, the approximation quality of \cite{Birka_PPKidneyExchange_2022} is on average $ 3.4\% $ better than for our protocol. This is due to the fact that \cite{Birka_PPKidneyExchange_2022} evaluates one solution for each cycle that exists in the private compatibility graph and then chooses the best among these. In some cases, this yields a slightly better solution than our approach, which only computes one possible approximation (cf.~Algorithm~\ref{alg:greedy}). However, this comes at the cost of data obliviousness since the flow of execution of their protocol depends on the number of cycles that exist in the private compatibility graph. Furthermore, the difference in the approximation quality is very small for the large numbers of patient-donor pairs which are to be expected in practice (e.g., about $ 200 $ pairs in the case of UNOS). For example, for $ 200 $ pairs, the approximation quality of~\cite{Birka_PPKidneyExchange_2022} is only $ 1.3\% $ better than for our protocol. Besides, recall that \cite{Birka_PPKidneyExchange_2022} does not scale for such a large number of pairs in practice. Thus, the slightly superior approximation quality of the approach from~\cite{Birka_PPKidneyExchange_2022} for the special case of crossover exchange does not yield an advantage in practice over our modified protocol.

\subsection{Simulation Results}

In Section~\ref{sec:simulation}, we simulated the use of our protocol~$ \kepApProtocol $ in a dynamic kidney exchange platform where the patient-donor pairs register and de-register over time. In this section, we provide the simulation results for the special case of crossover exchanges. The simulation setup remains exactly the same as described in Section~\ref{sub:sim_setup} with the exception that both the conventional model and the $ \kepApProtocol $ model only consider crossover exchanges. Thus, we compare the number of transplants that can be achieved in the conventional model for crossover exchanges only to the number of transplants that can be achieved in the $ \kepApProtocol $ model for crossover exchanges only. In particular, we compute the percentage of transplants achieved in the $ \kepApProtocol $ model compared to those achieved in the conventional model. 

\begin{figure}[t]
	\includegraphics*[scale=0.96]{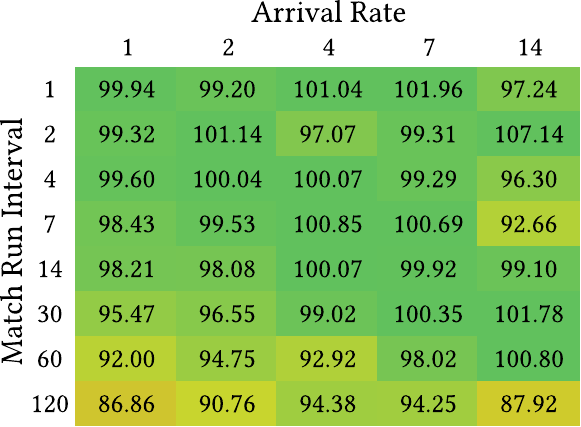}
	\caption{$ \kepApProtocol $ model vs.\ conventional model - comparison of the number of transplants measured in the $ \kepApProtocol $ model compared to the conventional model for a departure rate of $ 400 $ days and a match refusal rate of $ 20 $\%, considering the setting where only crossover exchanges are allowed. The percentages state the number of transplants achieved in the $ \kepApProtocol $ model divided by those achieved in the conventional model.}
	\label{fig:heatmap_cv_ap_crossover}	
\end{figure}

Figure~\ref{fig:heatmap_cv_ap_crossover} shows these percentages depending on the arrival rate and the match run interval. Recall that the arrival rate states the frequency at which new patient-donor pairs arrive, e.g., an arrival rate of $ 2 $ states that every $ 2 $ days a new pair arrives. The match run interval describes the frequency at which a match run is executed, e.g., a match run interval of $ 2 $ states that every $ 2 $ days a match run is executed. It still holds that the departure rate and the match refusal rate only have a very small impact on the performance difference between the two models. Therefore, we only show the results for fixed values of these parameters and rather analyze the influence of the arrival rate and match run interval, which heavily impact the performance difference between the two models. 

We essentially make the same three main observations as for the case with maximum cycle size $ \cycleBound = 3 $ (cf.~Section~\ref{sub:sim_results}).
First, in most cases the $ \kepApProtocol $ model performs better than suggested by the quality evaluation in the static setting (cf.~Section~\ref{app:quality_crossover}). 
Second, the $ \kepApProtocol $ model performs better for small match run intervals. 
Third, the $ \kepApProtocol $ model performs better for larger arrival rates.

Compared to the results for a maximum cycle size of $ 3 $, we observe that the percentages are higher for the case of crossover exchanges. In particular, the average percentage for the case of crossover exchange (cf.~Figure~\ref{fig:heatmap_cv_ap_crossover}) is about $ 7\% $ larger than for a maximum cycle size of $ 3 $ (cf.~Figure~\ref{tab:heatmap_cv_ap}). This goes in line with the results for the static setting, where our protocol~$ \kepApProtocol $ also achieves a better approximation quality for the case that only crossover exchanges are allowed.

\section{Numerical Evaluation Results}\label{app:results}

Table~\ref{tab:exact_results} presents the raw data for the run time of our protocol~$ \kepApProtocol $ underlying the plots in Figure~\ref{plt:runtimes} and Figure~\ref{plt:runtimes_two_cycles} as well as the amount of network traffic resulting from our protocol~$ \kepApProtocol $. The network traffic corresponds to the accumulated data sent by the three computing peers during the protocol execution.

\begin{table*}
	\caption{Run time [s] and network traffic [MB] for our protocol~$ \kepApProtocol $ both for the special case of crossover exchanges only and the common case where cycles up to size $ 3 $ are allowed. For each case, we consider three security models: (1) Semi-honest model with honest majority (\texttt{rep-ring}), (2) malicious model with honest majority (\texttt{ps-rep-ring}), (3) semi-honest model with dishonest majority (\texttt{semi2k}). Then, we measure the run time and network traffic both in the LAN setting with a bandwidth of 1Gbps and a latency of 1ms, and the WAN setting with a bandwidth of 1Gbps and a latency of 20ms. Note that the amount of traffic does not depend on the network setting.}
	\label{tab:exact_results}
	\begin{adjustbox}{width=\textwidth}
\centering
\begin{tabular}{c | r r r |r r r| r r | r r r| r r r| r r }
\multirow{4}{*}{\makecell{patient-donor\\pairs}} & \multicolumn{8}{c|}{crossover exchange} & \multicolumn{8}{c}{max.\ cycle size 3} \\\cline{2-17}
& \multicolumn{3}{c|}{\texttt{rep-ring}} & \multicolumn{3}{c|}{\texttt{ps-rep-ring}} & \multicolumn{2}{c|}{\texttt{semi2k}} & \multicolumn{3}{c|}{\texttt{rep-ring}} & \multicolumn{3}{c|}{\texttt{ps-rep-ring}} & \multicolumn{2}{c}{\texttt{semi2k}}\Tstrut\Bstrut\\
& \multicolumn{2}{c}{runtime} & \multirow{2}{*}{traffic}& \multicolumn{2}{c}{runtime} & \multirow{2}{*}{traffic} & \multicolumn{1}{c}{runtime} & \multirow{2}{*}{traffic} & \multicolumn{2}{c}{runtime} & \multirow{2}{*}{traffic}& \multicolumn{2}{c}{runtime} & \multirow{2}{*}{traffic}&\multicolumn{1}{c}{runtime} & \multirow{2}{*}{traffic}\Tstrut\Bstrut\\
& \small LAN & \multicolumn{1}{c}{\small WAN} &  & \multicolumn{1}{c}{\small LAN} & \multicolumn{1}{c}{\small WAN} &  & \multicolumn{1}{c}{\small LAN} & & \multicolumn{1}{c}{\small LAN} & \multicolumn{1}{c}{\small WAN} &  & \multicolumn{1}{c}{\small LAN} & \multicolumn{1}{c}{\small WAN} &  & \multicolumn{1}{c}{\small LAN} &  \Bstrut\\\hline
\Tstrut5 & 0.2 & 4 & 0.2 & 0.3 & 6 & 3 & 1 & 48 & 0.2 & 5 & 0.2 & 0.4 & 6 & 3 & 1 & 49\\
10 & 0.5 & 10 & 0.5 & 0.8 & 13 & 11 & 3 & 173 & 0.6 & 12 & 0.8 & 0.9 & 16 & 13 & 4 & 188\\
15 & 0.8 & 15 & 1 & 1 & 19 & 24 & 6 & 363 & 1 & 20 & 2 & 2 & 25 & 32 & 8 & 440\\
20 & 1 & 22 & 2 & 2 & 29 & 44 & 11 & 671 & 2 & 29 & 6 & 3 & 38 & 72 & 16 & 923\\
25 & 2 & 29 & 3 & 3 & 37 & 68 & 17 & 1024 & 2 & 37 & 12 & 4 & 49 & 133 & 28 & 1619\\
30 & 2 & 36 & 4 & 3 & 45 & 97 & 24 & 1454 & 3 & 50 & 24 & 6 & 69 & 240 & 45 & 2782\\
35 & 2 & 44 & 6 & 4 & 57 & 143 & 34 & 2147 & 3 & 57 & 40 & 8 & 85 & 390 & 68 & 4360\\
40 & 3 & 52 & 8 & 5 & 66 & 186 & 43 & 2768 & 4 & 72 & 70 & 12 & 114 & 632 & 106 & 6844\\
45 & 3 & 57 & 10 & 6 & 74 & 234 & 53 & 3461 & 5 & 80 & 104 & 16 & 136 & 918 & 149 & 9545\\
50 & 4 & 69 & 13 & 7 & 89 & 296 & 66 & 4312 & 6 & 96 & 166 & 23 & 178 & 1389 & 226 & 14294\\
55 & 4 & 75 & 16 & 8 & 98 & 359 & 79 & 5182 & 7 & 105 & 230 & 30 & 210 & 1900 & 298 & 18957\\
60 & 4 & 84 & 20 & 10 & 109 & 432 & 94 & 6167 & 8 & 122 & 345 & 41 & 267 & 2740 & 428 & 27515\\
65 & 5 & 95 & 26 & 12 & 127 & 561 & 123 & 8104 & 10 & 131 & 452 & 52 & 319 & 3619 & 556 & 35770\\
70 & 6 & 104 & 32 & 13 & 142 & 655 & 142 & 9354 & 11 & 146 & 600 & 68 & 388 & 4769 & 714 & 45996\\
75 & 6 & 110 & 37 & 15 & 152 & 753 & 165 & 10663 & 14 & 166 & 823 & 88 & 477 & 6333 & 961 & 62330\\
80 & 6 & 119 & 44 & 17 & 167 & 866 & 183 & 12118 & 16 & 184 & 1048 & 111 & 575 & 8058 & 1183 & 77590\\
85 & 7 & 126 & 51 & 18 & 178 & 980 & 210 & 13626 & 19 & 198 & 1290 & 135 & 673 & 9920 & 1435 & 93858\\
90 & 7 & 135 & 59 & 20 & 194 & 1113 & 231 & 15296 & 22 & 218 & 1610 & 167 & 804 & 12357 & 1761 & 115050\\
95 & 8 & 150 & 72 & 23 & 215 & 1272 & 261 & 17426 & 26 & 243 & 2096 & 210 & 957 & 15667 & 2286 & 150318\\
100 & 9 & 160 & 83 & 26 & 233 & 1426 & 293 & 19347 & 30 & 266 & 2542 & 253 & 1134 & 19040 & 2745 & 179748\\
105 & 9 & 167 & 93 & 28 & 248 & 1582 & 320 & 21300 & 34 & 286 & 3005 & 300 & 1325 & 22551 & 3221 & 210058\\
110 & 10 & 177 & 105 & 31 & 265 & 1759 & 353 & 23457 & 39 & 312 & 3596 & 359 & 1543 & 27016 & 3812 & 248551\\
115 & 10 & 184 & 117 & 33 & 278 & 1937 & 384 & 25637 & 45 & 335 & 4202 & 419 & 1787 & 31612 & 4482 & 287883\\
120 & 11 & 194 & 132 & 37 & 296 & 2136 & 423 & 28039 & 54 & 379 & 5345 & 515 & 2117 & 39233 & 5691 & 370346\\
125 & 11 & 202 & 145 & 39 & 311 & 2337 & 456 & 30456 & 60 & 406 & 6136 & 589 & 2411 & 45184 & 6574 & 421522\\
130 & 13 & 224 & 176 & 46 & 355 & 2818 & 559 & 37637 & 69 & 442 & 7126 & 692 & 2796 & 52796 & 7575 & 488841\\
135 & 13 & 232 & 192 & 47 & 371 & 3050 & 604 & 40462 & 78 & 476 & 8119 & 791 & 3193 & 60278 & 8544 & 552762\\
140 & 14 & 244 & 212 & 52 & 399 & 3313 & 657 & 43570 & 88 & 519 & 9349 & 909 & 3667 & 69514 & 9877 & 631799\\
145 & 14 & 252 & 231 & 56 & 422 & 3572 & 694 & 46649 & 97 & 554 & 10576 & 1027 & 4119 & 78765 & 11020 & 710367\\
150 & 15 & 262 & 253 & 60 & 446 & 3862 & 749 & 50034 & 117 & 630 & 13030 & 1223 & 4767 & 94705 & 13697 & 889389\\
155 & 15 & 269 & 274 & 65 & 464 & 4146 & 791 & 53376 & 128 & 675 & 14552 & 1367 & 5334 & 106107 & 15251 & 986976\\
160 & 16 & 280 & 300 & 67 & 490 & 4468 & 845 & 57049 & 144 & 733 & 16417 & 1555 & 6045 & 120020 & 17065 & 1106490\\
165 & 17 & 287 & 324 & 71 & 509 & 4779 & 900 & 60663 & 157 & 783 & 18246 & 1728 & 6747 & 133736 & 19057 & 1223220\\
170 & 17 & 298 & 352 & 78 & 535 & 5134 & 959 & 64634 & 177 & 853 & 20472 & 1943 & 7574 & 150367 & 21192 & 1365220\\
175 & 18 & 304 & 379 & 82 & 556 & 5470 & 1017 & 68528 & 196 & 915 & 22650 & 2159 & 8404 & 166702 & 23221 & 1503590\\
180 & 19 & 314 & 410 & 88 & 580 & 5854 & 1095 & 72806 & 216 & 996 & 25283 & 2465 & 9398 & 186397 & 25791 & 1670900\\
185 & 20 & 341 & 476 & 94 & 630 & 6408 & 1192 & 80187 & 245 & 1092 & 30168 & 2734 & 10496 & 217021 & 31191 & 2036290\\
190 & 21 & 352 & 513 & 101 & 661 & 6833 & 1274 & 84886 & 272 & 1190 & 33333 & 3045 & 11681 & 240513 & 34405 & 2238530\\
195 & 22 & 360 & 546 & 105 & 684 & 7236 & 1344 & 89440 & 304 & 1274 & 36386 & 3387 & 12795 & 263300 & 37414 & 2432620\\
200 & 23 & 372 & 586 & 114 & 715 & 7695 & 1426 & 94467 & 325 & 1378 & 40057 & 3741 & 14176 & 290609 & 41081 & 2666330\\
\end{tabular}
\end{adjustbox}
\end{table*}

\end{document}